\DeclareMathOperator{\tr}{Tr}
\DeclareMathOperator{\mtr}{\nat Tr}
\DeclareMathOperator{\ltr}{\nat Tr^N}
\newcommand{\defeq}{\xlongequal{\!\!\!\mathtt{def}\!\!\!}}
\newcommand{\defif}{\stackrel{\mathtt{def}}{\Longleftrightarrow}}
\newcommand{\ie}{{\it i.e.}}
\newcommand{\eg}{{\it e.g.}}
\newcommand{\cf}{{\it cf.}\,}
\newcommand{\etc}{{\it etc.}}
\newcommand{\dgp}[1]{{\mathcal G}^{#1}_A}
\newcommand{\dg}{\dgp{N}}
\newcommand{\CG}{{\mathcal G}}
\newcommand{\CK}{{\mathcal K}}
\newcommand{\nat}{\mathbb{N}}
\renewcommand{\int}{\mathbb{Z}}
\newcommand{\suff}{\mathrm{suff}}
\newcommand{\pref}{\mathrm{pref}}
\newcommand{\Walk}{\mathtt{walk}_{\dg}}
\newcommand{\Word}{\mathtt{word}_{\dg}}
\newcommand{\occv}[2]{|#1|_{(#2)}}
\newcommand{\sufv}[2]{|#1|_{(#2)}^{\mathtt{suff}}}
\newcommand{\diff}{\mathtt{diff}}
\newcommand{\dec}[1]{\Phi_{#1}}
\newcommand{\comp}[1]{\Psi_{#1}}
\newcommand{\walk}{\omega}
\renewcommand{\path}{\pi}
\newcommand{\cycle}{\gamma}
\newcommand{\walks}{{\mathcal W}}
\newcommand{\paths}{{\mathcal P}}
\newcommand{\cycles}{{\mathcal C}}
\newcommand{\emptyseq}{\emptyset}
\newcommand{\conn}{\odot}
\newcommand{\card}[1]{\#\!\left(#1\right)}
\newcommand{\wmix}[1]{L(#1)}
\newcommand{\from}[1]{\mathtt{from}(#1)}
\renewcommand{\to}[1]{\mathtt{into}(#1)}
\newcommand{\pfun}{\rightharpoonup}
\newcommand{\pleq}{\leq_{\mathtt{pt}}}
\begin{document}
\title{Note on the Infiniteness and Equivalence Problems for Word-MIX
Languages}
\author{Ryoma Sin'ya}
\institute{Akita University\\
\email{ryoma@math.akita-u.ac.jp}}
\maketitle

\begin{abstract}
In this note we provide a (decidable) graph-structural characterisation
of the infiniteness of $\wmix{w_1, \ldots, w_k}$,
where $\wmix{w_1, \ldots, w_k} = \{ w \in A^* \mid |w|_{w_1} = \cdots = |w|_{w_k}\}$ is the set of all words
that contain the same number of subword occurrences of parameter words
 $w_1, \ldots, w_k$.
We also provide the decidable characterisation of the equivalence for
 those languages.
Although those two decidability results are also obtained from more general known decidability results on unambiguous constrained automata,
this note tries to give a self-contained (without the
 knowledge about constrained automata) proof of the decidability.
\end{abstract}


\section{Introduction}\label{sec:intro}
Counting occurrences of letters in words is a major topic in formal
language theory and much ink has been spent on this topic.
Measuring the counting ability of a language class is
in this topic.
For example, Joshi et al.~\cite{joshi} suggested that the language MIX
$\{w \in \{a,b,c\}^* \mid |w|_a = |w|_b = |w|_c \}$ should not be in the class of
so-called mildly context-sensitive languages since it allows too much freedom in word order, so that
relations between MIX and several language classes have been
investigated (\eg,
indexed languages~\cite{marsh},
range concatenation languages~\cite{rcg},
tree-adjoining languages~\cite{kanazawa}, multiple
context-free languages~\cite{salvati}\cite{sorokin}, \etc). The Parikh
map is another rich example on this topic~\cite{parikh}.

In the recent work~\cite{Finn} by Colbourn et. al.,
the counting feature of MIX is generalised from the counting \emph{letter}
occurrences to the counting of \emph{word} occurrences.
They considered several problems for languages of the form
$\wmix{w_1,\ldots, w_k} = \{ w \in A^* \mid |w|_{w_1} = \cdots =
|w|_{w_k}\}$ which we call \emph{Word-MIX languages} (WMIX for short) in
this note.
It is interesting that the situation is drastically changed by this
generalisation.
The decidability of the infiniteness/equivalence turn to be non-trivial:
$\wmix{0,1,00,11}$ and $\wmix{0,1,01,10}$ are finite but
$\wmix{00,11,000,111}$ is infinite over $A = \{0,1\}$ (example from~\cite{Finn}), and
$\wmix{ab,ba,a}$ is infinite but
$\wmix{ab,ba,a,b}$
is finite over $A= \{a, b\}$ (these two examples appear again in Section~\ref{sec:ex}), for example.
In addition, while $\wmix{w_1, w_2}$ is always deterministic context-free (DCFL), it can also
be regular  ($\wmix{ab, ba} \subseteq \{a,b\}^*$
is regular, for example)~\cite{Finn}.
This kind of generalisation (from letter occurrences to word occurrences) is also considered in the context of the Parikh map~\cite{parikhmat}.
Colbourn et. al.~\cite{Finn} provided a necessary and sufficient condition
for $w_1$ and $w_2$ for these languages to be regular, and gave a
polynomial time algorithm for testing that condition.
The finiteness of $\wmix{w_1, w_2}$ is also considered in~\cite{Finn}
and they proved that, for any non-empty words $w_1, w_2 \in A^*$,
$\wmix{w_1, w_2}$ is finite if and only if the alphabet $A$ consists
of a single letter ($\ie, A = \{a\}$) and $w_1 \neq w_2$.
For more general case, allowing more than two parameter words $\wmix{w_1, \ldots, w_k} \, (k \geq 2)$,
 they give a sufficient condition for the infiniteness of $\wmix{w_1,
 \ldots, w_k}$ (Theorem~8 in \cite{Finn}):
 if all of $w_1, \ldots, w_k$ have the same length, then
 $\wmix{w_1, \ldots, w_k}$ is infinite.

For the fully general case, the decidability of both regularity and
infiniteness for WMIX languages can be derived from some known
results on \emph{constrained automata} (CA for short),
since \(\wmix{w_1, \ldots, w_k}\) is always recognised by a deterministic
CA, and its regularity and Parikh image are effectively
computable~\cite{UnCA}.
In this note, we provide a self-contained
(without the knowledge about constrained automata) description of a
 decidable, necessary and sufficient condition for the infiniteness of
 $\wmix{5w_1, \ldots, w_k}$, and give some open problems about the
 infiniteness.
Our proof is based on a \emph{combinatorics on walks in the de Bruijn
 graph}.
 The ($n$-dimensional) de Bruijn graph~\cite{deBruijn} can track all
 information of subword occurrences (of length at most $n$), hence it is
 a very useful tool for counting subword occurrences and related
 problems.
 The de Bruijn graph also played a key role in the proof of
 Theorem~8 in \cite{Finn}.

 The rest of this note consists as follows.
 In Section~\ref{sec:pre}, we give some preliminary definitions and
 propositions about words, orders, graphs and walks.
 Section~\ref{sec:decomp} investigates a simple decomposition method
 which decomposes a walk into a path and a sequence of cycles.
 This decomposition is useful for the proof of our main theorem.
 The main result of this note (Theorem~\ref{thm}), which states
 a decidable characterisation of the infiniteness of
 a WMIX language, is stated and proved in
 Section~\ref{sec:proof}, the decidability is explained with two
 examples in Section~\ref{sec:ex}.
 The decidability of the equivalence for two WMIX languages is also
 explained in Section~\ref{sec:equiv}.
 We end this note with list of open problems in Section~\ref{sec:future}.

\section{Preliminaries}\label{sec:pre}
For a set $X$, we denote by $\card{X}$ the cardinality of $X$.
We write $\card{X} = \infty$ if $X$ is an infinite set, and write
$\card{X} < \infty$ otherwise.
We denote by $\nat$ the set of natural numbers including $0$.
We call a mapping $M: X \rightarrow \nat$ multiset over $X$.

\subsection{Words and Orders}
For an alphabet $A$, we denote the set of all (resp. non-empty) words
over $A$ by $A^*$ (resp. $A^+$).
We write $A^n$ (resp. $A^{<n}$) the set of all words of length $n$
(resp. less than $n$).
For a pair of words $v, w \in A^*$, $|w|_v$ denotes the number of
subword occurrences of $v$ in $w$
\[
|w|_v \defeq \card{\{ (w_1, w_2) \in A^* \times
 A^* \mid w_1 v w_2 = w \}}.
\]
 For words $w_1, \ldots, w_k \in A^*$, we define
\begin{align*}
\wmix{w_1, \ldots, w_k} \defeq \{ w \in A^* \mid |w|_{w_1} = \cdots =
 |w|_{w_k} \}
\end{align*}
and call it the \emph{Word-MIX language of $k$ parameter words $w_1,
\ldots, w_k$} ((k-)WMIX for short).
For a word $w \in A^*$, we denote
the set of prefixes and suffixes of $w$ by
\begin{align*}
\pref(w) & \defeq \{ u \in A^* \mid
 uv = w \text{ for some } v \in A^*\} \\
\suff(w) & \defeq \{ v \in A^*
\mid uv = w \text{ for some } u \in A^* \} 
\end{align*}
and denote the length-$n$ ($n \leq |w|$) prefix and suffix of $w$
by $\pref_n(w)$ and $\suff_n(w)$, respectively.

A quasi order $\leq$ on a set $X$ is called \emph{well-quasi-order}
(\emph{wqo} for short) if any infinite sequence $(x_i)_{i \in \nat} \,
(x_i \in X)$ contains an increasing pair $x_i \leq x_j$ with $i < j$.
Let $\leq_1$ be a quasi order on a set $X_1$ and $\leq_2$ be a
 quasi order on a set $X_2$.
 The \emph{product order} $\leq_{1,2}$
 is a quasi order on $X_1 \times X_2$ defined by
\[
  (x_1, y_1) \leq_{1,2} (x_2, y_2) \defif x_1 \leq_1
 x_2 \text{ and } y_1 \leq_2 y_2.
\]
\begin{lemma}[\cf Proposition~6.1.1 in \cite{deLuca}]\label{wqo}
 Let $\leq_1$ be a wqo on a set $X_1$ and $\leq_2$ be a
 wqo on a set $X_2$. The product order $\leq_{1,2}$
 is again a wqo on $X_1 \times X_2$.
\end{lemma}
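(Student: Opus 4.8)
The plan is to deduce the result from a stronger, more workable reformulation of the wqo property and then to apply it once in each coordinate. The definition given guarantees only a single increasing pair, which is awkward to iterate; so first I would establish the standard equivalent characterisation: a quasi order $\leq$ on a set $X$ is a wqo if and only if every infinite sequence $(z_i)_{i \in \nat}$ in $X$ contains an \emph{infinite} non-decreasing subsequence $z_{i_0} \leq z_{i_1} \leq \cdots$ with $i_0 < i_1 < \cdots$. The backward direction is trivial (such a subsequence already furnishes an increasing pair); the forward direction is the substantive one.

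For that forward direction, given an infinite sequence $(z_i)_{i \in \nat}$, I would call an index $i$ \emph{terminal} if there is no $j > i$ with $z_i \leq z_j$. The set of terminal indices cannot be infinite, since restricting the sequence to the terminal indices would yield an infinite subsequence containing no increasing pair, contradicting the wqo hypothesis. Hence there is an index $N$ beyond which no index is terminal, and starting from any $i_0 > N$ one can greedily choose $i_0 < i_1 < i_2 < \cdots$ with $z_{i_0} \leq z_{i_1} \leq \cdots$, because each chosen index, being non-terminal, is dominated by some later term.

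With this in hand the product case is short. Take any infinite sequence $((x_i, y_i))_{i \in \nat}$ in $X_1 \times X_2$. Applying the characterisation to the first coordinates $(x_i)_{i \in \nat}$ under $\leq_1$ yields indices $i_0 < i_1 < \cdots$ with $x_{i_0} \leq_1 x_{i_1} \leq_1 \cdots$. Now consider the infinite sequence of the corresponding second coordinates $(y_{i_n})_{n \in \nat}$ in $X_2$; since $\leq_2$ is a wqo, it contains an increasing pair $y_{i_a} \leq_2 y_{i_b}$ with $a < b$. Then $i_a < i_b$, and $x_{i_a} \leq_1 x_{i_b}$ holds because the $x$-subsequence is non-decreasing, while $y_{i_a} \leq_2 y_{i_b}$ holds by choice, so $(x_{i_a}, y_{i_a}) \leq_{1,2} (x_{i_b}, y_{i_b})$ is the required increasing pair. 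The only delicate point in the whole argument is the extraction lemma of the previous paragraph, and within it the claim that the terminal indices form a finite set; that is where I would take care to be airtight, since everything downstream is a routine chaining of the two coordinates.
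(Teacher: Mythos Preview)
Your argument is correct and is the standard proof of this fact: the extraction of an infinite non-decreasing subsequence via the finiteness of the set of \emph{terminal} indices is exactly the classical device (sometimes phrased as an application of the infinite Ramsey theorem), and the two-step chaining through the coordinates is the usual conclusion. There is nothing to fault in the logic; the only cosmetic remark is that you could streamline slightly by noting that it suffices to find an increasing pair, so after extracting the $\leq_1$-monotone subsequence you only need a single increasing pair in the second coordinate, which is immediate from the definition of wqo --- and indeed this is what you do.

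As for comparison with the paper: the paper does not actually supply a proof of this lemma. It is stated with a citation (Proposition~6.1.1 in \cite{deLuca}) and then used as a black box, chiefly to derive Dickson's lemma and the wqo on multisets over a finite set that is invoked in the proof of Theorem~\ref{thm}. So your write-up goes beyond what the paper provides; the argument you give is essentially the one found in the cited reference and in most textbook treatments.
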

We list some examples of wqos below:
\begin{enumerate}
 \item The identity relation $=$ on any finite set $X$ is a wqo (\emph{the pigeonhole principle}).
 \item The usual order $\leq$ on $\nat$ is a wqo.
 \item The product order $\leq_m$ on $\nat^m$ is a wqo  for any
	   $m \geq 1$ (\emph{Dickson's lemma}), which is a direct corollary
	   of Lemma~\ref{wqo}.
 \item The point-wise order $\pleq$ on the multisets
	   $\nat^X$ ($M \pleq M' \defif M(x) \leq M'(x)$ for all $x \in X$)
	   over a finite set $X$ is a wqo
	   (just a paraphrase of Dickson's lemma).
 \end{enumerate}

\subsection{Graphs and Walks}
Let $\CG = (V, E)$ be a (directed) graph.
 We call a sequence of vertices $\walk = (v_1, \ldots, v_n) \in V^n \,
 (n \geq 1)$ \emph{walk} (from $v_1$ into $v_n$ in $\CG$) if $(v_i, v_{i+1}) \in E$ for each $i
 \in \{1, \ldots,  n-1\}$, and define the length of $\walk$ as $n-1$ and
 denote it by $|\walk|$.
 We denote by $\from{\walk}$ and $\to{\walk}$ the source $\from{\walk}
 \defeq v_1$ and the target $\to{\walk} \defeq v_n$ of $\walk$.
 $\walk$ is called an \emph{empty walk} if $|\walk| = 0$.
 If two walks $\walk_1 = (v_1, \ldots, v_m), \walk_2 = (v'_1, \ldots,
 v'_n)$ is connectable (\ie, $\to{\walk_1} = \from{\walk_2}$), we write
 $\walk_1 \conn \walk_2$ for the connecting walk $\walk_1 \conn \walk_2
 \defeq (v_1, \ldots, v_m, v'_2, \ldots, v'_n)$.
  A non-empty walk $\walk$ is called $\emph{loop}$ (on $\from{\walk}$) if $\from{\walk} = \to{\walk}$.
  A walk $(v_1, \ldots, v_n)$ is called \emph{path}
 if $v_i \neq v_j$ for every $i, j \in \{1, \ldots, n\}$ with $i \neq
 j$.
 A loop $(v, v_1, \ldots, v_n, v)$ is called \emph{cycle} if
 $(v, v_1, \ldots, v_n)$ is a path.
 We use the metavariable $\path$ for a path, and the metavariable $\cycle$
 for a cycle.
 For a cycle $\cycle$ and $n \geq 1$, we write $\cycle^n$ for
 the loop which is an $n$-times repetition of $\cycle$.
We denote by $\walks(\CG), \paths(\CG),$ and by
 $\cycles(\CG)$ the set of all walks, paths and cycles in $\CG$.
 Note that $\walks(\CG)$ is infinite in general, but
 $\paths(\CG)$ and $\cycles(\CG)$ are both finite if $\CG$ is finite
 (\ie, $\card{V} < \infty$).

The \emph{$N$-dimensional de Bruijn graph} $\dg = (A^N, E)$ over $A$ is a graph whose vertex set $A^N$ is the set of words
of length $N$ and the edge set $E$ is defined by
\[
 E \defeq \{ (a v, v b) \mid a,b \in A, v \in A^{N-1} \}.
\]
The case $N = 2$ is depicted in Fig.~\ref{fig:dg}.

\begin{figure}[t!]
 \centering\includegraphics[width=0.9\columnwidth]{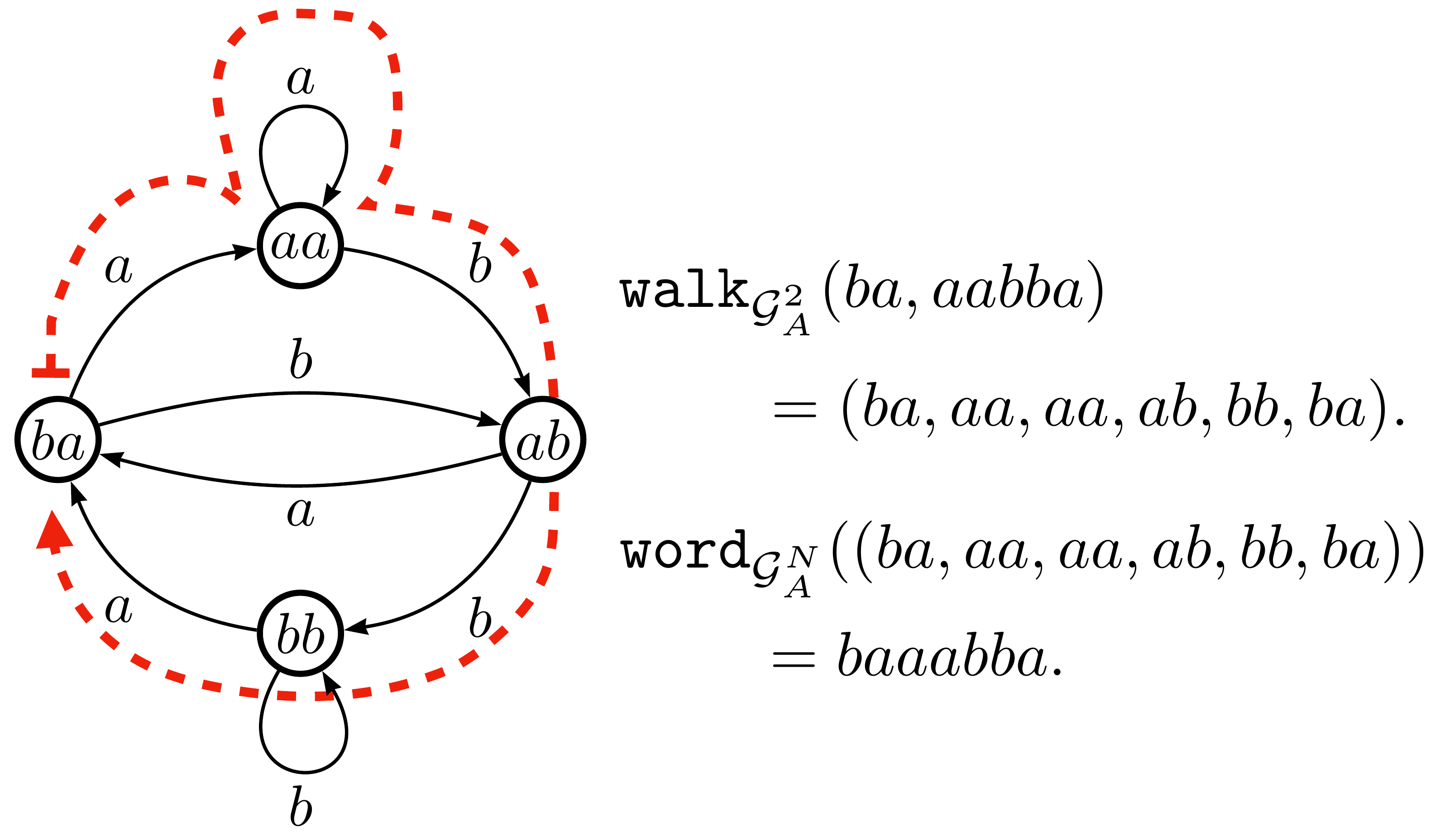}
 \caption{The 2-dimensional de Bruijn graph $\dgp{2}$ over $A = \{a,
 b\}$,
 a walk $(ba, aa, aa, ab, bb, ba)$ (dotted red arrow) on $\dgp{2}$
 and its corresponding word $baaabba$.}
 \label{fig:dg}
\end{figure}

Let $v$ be a vertex of $\dg$.
A word $w = a_1 \cdots a_m \in A^+$ induces the walk $(v, v_1, \ldots,
v_m)$ (where $v_i = \suff_n(v \, \pref_i(w))$) in $\dg$, and we denote
it by $\Walk(v, w)$.
Conversely, a walk $\walk = (v_1, \ldots, v_n)$ in $\dg$ induces the word
$v_1 \suff_1(v_2) \cdots \suff_1(v_n) \in A^*$, and we denote it by
$\Word(\walk)$ (see Fig.~\ref{fig:dg}).
For words $w, w_1, \ldots, w_k \in A^*$
and a walk $\walk = (v_0, v_1, \ldots, v_n) \in \walks(\dg)$, we define
the following vectors in $\nat^k$:
\begin{align*}
\sufv{w}{w_1, \ldots, w_k} \defeq \, &
 (c_1, \ldots, c_k)
 \text{ where } c_i = 1 \text{ if }
 w_i \in \suff(w), c_i = 0 \text{ otherwise},\\
\occv{w}{w_1, \ldots, w_k} \defeq \, &
 (|w|_{w_1}, \ldots, |w|_{w_k})
 \qquad
 \occv{\walk}{w_1, \ldots, w_k} \defeq 
 \sum_{i = 1}^{n} \sufv{v_i}{w_1, \ldots, w_k}.
 \end{align*}
We notice that the range of the summation in the above definition
of $\occv{\walk}{w_1, \ldots, w_k}$ \emph{does not contain $0$},
hence $\occv{\walk}{w_1, \ldots, w_k} = (0, \ldots, 0)$ if
$\walk$ is an empty walk $\walk = (v_0)$.
The next proposition states a basic property of $\dg$.

\begin{proposition}\label{prop:dg}
Let $w_1, \ldots, w_k \in A^*$ and $N = \max(|w_1|, \ldots, |w_k|)$.
For any pair of words $v,w \in A^*$ such that $|v| = N$, we have
\[
\occv{vw}{w_1, \ldots, w_k}
=
\occv{v}{w_1, \ldots, w_k}
 + \occv{\walk}{w_1, \ldots, w_k}
\]
 where $\walk = \Walk(v, w)$.
\end{proposition}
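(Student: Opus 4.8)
The plan is to reduce the vector identity to a scalar statement about counting the ending positions of subword occurrences. Fix $w = a_1 \cdots a_m$ and write $\walk = \Walk(v,w) = (v, v_1, \ldots, v_m)$, so that each vertex $v_i = \suff_N(v\,\pref_i(w))$ is the length-$N$ window of $vw$ ending at position $N + i$. Since the asserted equation lives in $\nat^k$, it suffices to prove it componentwise: for each fixed $j \in \{1, \ldots, k\}$ I would establish the scalar identity
\[
|vw|_{w_j} = |v|_{w_j} + \sum_{i=1}^{m} \left(\sufv{v_i}{w_1, \ldots, w_k}\right)_j,
\]
where $(\cdot)_j$ denotes the $j$-th coordinate, which by definition equals $1$ exactly when $w_j \in \suff(v_i)$ and $0$ otherwise.

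First I would classify every occurrence of $w_j$ in $vw$ by its rightmost (ending) position. An occurrence ending at position $p \in \{1, \ldots, |vw|\}$ begins at position $p - |w_j| + 1$; because $|w_j| \le N = |v|$, such an occurrence lies entirely inside the prefix $v$ precisely when $p \le N$, and it reaches into the $w$-part precisely when $p \in \{N+1, \ldots, N+m\}$. The occurrences of the first kind are counted by $|v|_{w_j}$, so the remaining task is to show that the number of occurrences of $w_j$ ending in $\{N+1, \ldots, N+m\}$ equals the sum on the right.

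The key step — and the one that must be argued carefully — is the local characterisation: for each $i \in \{1, \ldots, m\}$, an occurrence of $w_j$ ends at position $N+i$ in $vw$ if and only if $w_j \in \suff(v_i)$. This is exactly where the hypothesis $|w_j| \le N$ is used. The window $v_i$ consists of the letters of $vw$ at positions $i+1, \ldots, N+i$, so its last $|w_j|$ letters occupy positions $N+i-|w_j|+1, \ldots, N+i$; the bound $|w_j| \le N$ guarantees $N+i-|w_j|+1 \ge i+1$, so these letters genuinely sit inside the window and coincide with the last $|w_j|$ letters of the length-$(N+i)$ prefix of $vw$. Hence $w_j$ being a suffix of $v_i$ is the same as $w_j$ occurring in $vw$ and terminating at position $N+i$. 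Were some $w_j$ longer than $N$, the window $v_i$ would be too short to witness the occurrence and this equivalence would fail. Summing the equivalence over $i = 1, \ldots, m$ shows that $\sum_{i=1}^{m}(\sufv{v_i}{w_1, \ldots, w_k})_j$ counts exactly the occurrences of $w_j$ ending in $\{N+1, \ldots, N+m\}$, namely $|vw|_{w_j} - |v|_{w_j}$, which is the scalar identity.

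Finally I would reassemble the $k$ scalar identities into the claimed vector equation, and treat the degenerate case in which $w$ is empty separately: then $vw = v$, the induced walk is the single-vertex walk $(v)$, and $\occv{\walk}{w_1, \ldots, w_k} = (0, \ldots, 0)$ because its defining sum ranges over the empty index set, so the identity holds trivially.
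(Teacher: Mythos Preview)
Your argument is correct. The paper's own proof is the one-liner ``straightforward induction on the length of $w$''; the inductive step of that proof would rest on precisely the observation you establish directly, namely that appending one more letter adds a single new vertex $v_i$ to the walk and the additional occurrences of $w_j$ ending at position $N+i$ are exactly those witnessed by $w_j \in \suff(v_i)$ (which is where $|w_j| \le N$ is used). Your position-counting argument is simply the unrolled form of that induction, so the two approaches have the same content.
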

\begin{proof}
Straightforward induction on the length of $w$.
\end{proof}

\section{Path-Cycle Decomposition of Walks}\label{sec:decomp}
In this section, we provide a simple method which
decomposes, in left-to-right manner, a walk $\walk$ into a (possibly
empty) path $\path$ and a sequence of cycles $\Gamma$ (Fig.~\ref{fig:decomp}).
This decomposition is probably folklore but useful for our main proof in
the next section.
We also introduce in this section the notion of multi-traces and traces
of walks,
which play crucial role in the characterisation of the infiniteness and
equivalence for WMIX languages.

Let $\CG = (V, E)$ be a graph.
For a pair of sequences of cycles $\Gamma_1 = (\cycle_1, \ldots,
\cycle_n),
\Gamma_2 = (\cycle'_1, \ldots, \cycle'_m)$, we write $\Gamma_1.\Gamma_2$
for the concatenation $(\cycle_1, \ldots, \cycle_n, \cycle'_1, \ldots,
\cycle'_m)$. When $\Gamma_1 = (\cycle)$ we simply write
$\cycle.\Gamma_2$ for $\Gamma_1.\Gamma_2$
We write $\emptyseq$ for the empty sequence of cycles.
For $\Gamma = (\cycle_1, \ldots, \cycle_n)$, we denote
by $\Gamma(i)$ for the $i$-th component $\cycle_i$ of $\Gamma$, and 
denote by $|\Gamma|_\cycle$ the number $\card{\{i \mid \Gamma(i) =
\cycle\}}$ of occurrences of $\cycle$ in $\Gamma$.
For a walk $\walk = (v_1, \ldots, v_n)$, we denote by $V(\walk)$ the set of all vertices
appeared in $\walk$: $V(\walk) \defeq \{ v_1, \ldots, v_n \}$.

\begin{figure}[t]
 \centering\includegraphics[width=0.86\columnwidth]{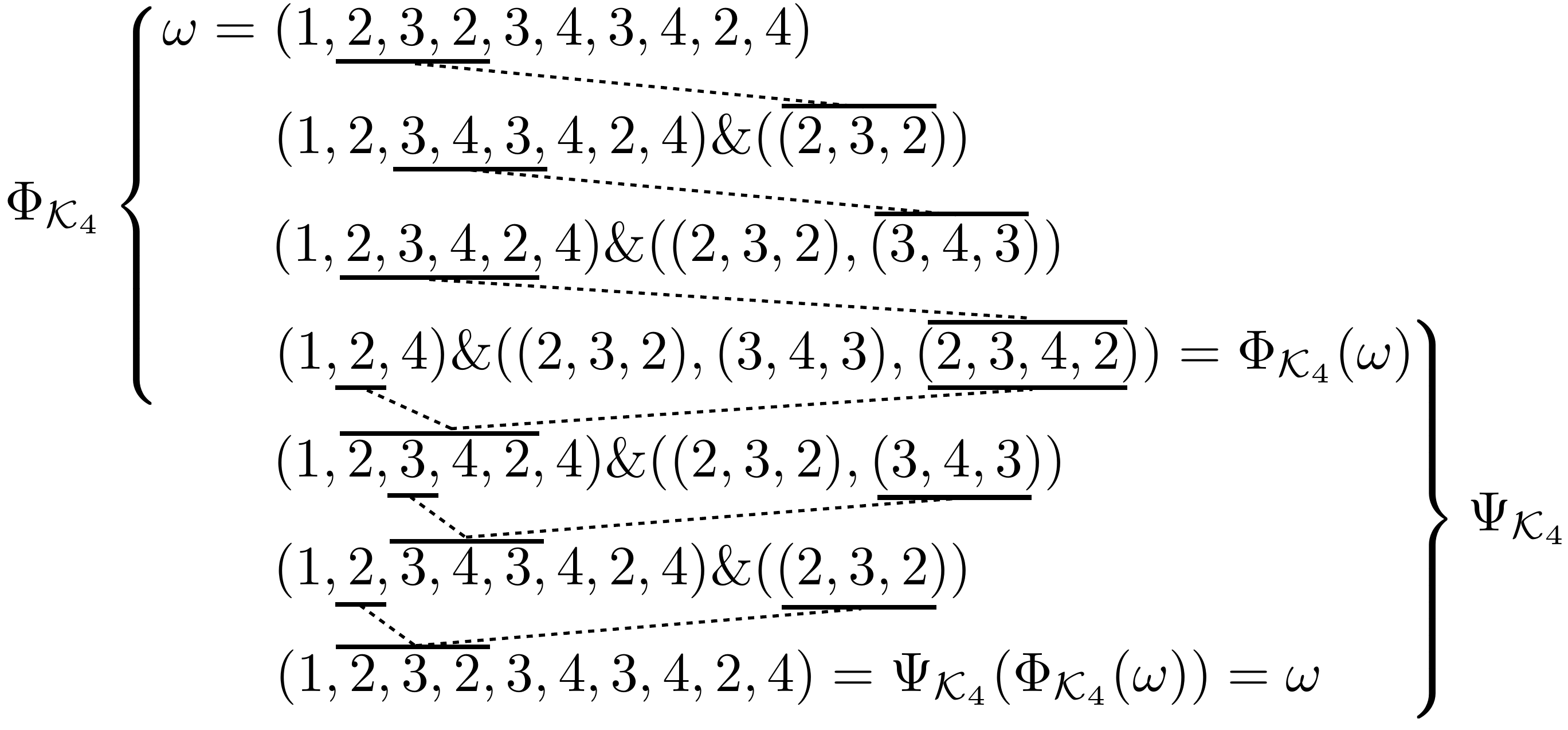}
 \caption{Computation of $\dec{\CK_4}$ and $\comp{\CK_4}$}
 \label{fig:decomp}
\end{figure}

We then define a decomposition function $\dec{\CG}: \walks(\CG)
\rightarrow \paths(\CG) \times \cycles(\CG)^*$ inductively as follows:
\begin{align*}
 \dec{\CG}( (v) ) & \defeq ((v), \emptyseq)\\
 \dec{\CG}(\walk \conn (v, v')) & \defeq
 \begin{cases}
  (\path \conn (v, v'), \,\, \Gamma) \qquad\text{ if } v' \notin V(\path),\\
  ((v_1, \ldots, v_{m-1}, v'), \,\, \Gamma . (v', v_{m+1}, \ldots, v, v'))\\
  \phantom{(\path \conn (v, v'), \,\, \Gamma)}
  \qquad\text{ if }
  \path = (v_1, \ldots, v_{m-1}, v', v_{m+1}, \ldots, v)
 \end{cases}\\
& \quad\qquad \text{ where } (\path, \Gamma) = \dec{\CG}( \walk ).
\end{align*}
Conversely, we define a composition (partial) function $\comp{\CG}:
{\walks(\CG) \times \cycles(\CG)^* \pfun \walks(\CG)}$
inductively as follows:
\begin{align*}
 \comp{\CG}(\walk, \emptyseq) & \defeq \walk\\
 \comp{\CG}(\walk, \cycle.\Gamma) &\defeq 
  \begin{cases}
   \path \conn \cycle \conn \walk' & \text{if } 
   \path \conn (v) \conn \walk' = \comp{\CG}(\walk, \Gamma)\\
   & \phantom{\text{if }} \text{ where } \from{\cycle} = v \text{ and } \path \text{ is a path to } v, \\
  \text{undefined} & \text{if } \from{\cycle} \notin V(\comp{\CG}(\walk, \Gamma)).
  \end{cases}
\end{align*}

We list some important properties of $\dec{\CG}$ and $\comp{\CG}$.
\begin{proposition}\label{prop:dec}
 Let $\CG = (V, E)$ be a graph and $\walk \in \walks(\CG)$.
 Then the followings hold for $(\path, \Gamma) = \dec{\CG}(\walk)$:
 \begin{enumerate}
  \item $\path$ is a path in $\CG$. \label{decpi}
  \item $\Gamma$ is a sequence of cycles in $\CG$. \label{deccy}
  \item $|\walk| = |\path| + \sum_{i = 1}^{|\Gamma|} |\Gamma(i)|$ \label{declen}
  \item $\walk = \comp{\CG}(\path, \Gamma)$, \ie,
		$\comp{\CG} \circ \dec{\CG}$ is the identity
		function on $\walks(\CG)$. \label{decinv}
 \end{enumerate}
\end{proposition}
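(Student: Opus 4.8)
The plan is to prove all four items simultaneously by induction on the length $|\walk|$, following the recursive definition of $\dec{\CG}$, while carrying along the auxiliary invariant $\to{\path} = \to{\walk}$; this invariant is implicit in the very formulation of $\dec{\CG}$, since the freshly added edge is written $(v,v')$ with $v = \to{\path}$, and it is exactly what makes the two cases meaningful. In the base case $\walk = (v)$ every item is immediate. For the inductive step on $\walk \conn (v,v')$ with $(\path,\Gamma) = \dec{\CG}(\walk)$, I would treat the two defining cases separately. Items~(\ref{decpi}) and~(\ref{deccy}) are purely structural: in the first case $\path \conn (v,v')$ is a path because $v' \notin V(\path)$ and $(v,v') \in E$; in the second case the truncation $(v_1,\dots,v_{m-1},v')$ is a prefix of $\path$, hence again a path, and the extracted loop $(v',v_{m+1},\dots,v,v')$ is a cycle because $v',v_{m+1},\dots,v$ are pairwise distinct, being a contiguous block of the path $\path$. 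Item~(\ref{declen}) is bookkeeping: each step adds exactly one edge to $\walk$, and in both cases one checks that this edge is accounted for once, either by lengthening $\path$ or by creating the new cycle, whose length equals that of the truncated tail plus the closing edge.

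The real content is item~(\ref{decinv}), for which I would first isolate two lemmas about $\comp{\CG}$ alone, each proved by a short induction on the length of the cycle sequence. The first is a \emph{sequentiality} property, $\comp{\CG}(\walk, \Gamma_1.\Gamma_2) = \comp{\CG}(\comp{\CG}(\walk,\Gamma_2), \Gamma_1)$, which follows at once by peeling the head of $\Gamma_1$. The second, which is the crux of the proposition, is a \emph{commutation} property: if $\to{\walk_0} = v$, $(v,v') \in E$, and $\comp{\CG}(\walk_0,\Gamma)$ is defined, then $\comp{\CG}(\walk_0 \conn (v,v'), \Gamma)$ is defined and equals $\comp{\CG}(\walk_0,\Gamma) \conn (v,v')$; informally, appending a trailing edge commutes with all of the cycle insertions.

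I expect this commutation lemma to be the main obstacle, because $\comp{\CG}$ inserts the cycles into a \emph{growing} walk and one must argue that the trailing edge never becomes an insertion site. I would prove it by the downward induction that mirrors $\comp{\CG}$: writing $W$ for the walk obtained after the inner insertions and $W \conn (v,v')$ for its counterpart with the trailing edge, the next cycle $\cycle$ is inserted at the first occurrence of $\from{\cycle}$; since $\comp{\CG}(\walk_0,\Gamma)$ is defined, $\from{\cycle}$ already occurs in $W$, and because $W$ is a prefix of $W \conn (v,v')$ this first occurrence is the same in both walks, so the insertion takes place at the identical interior position and the trailing edge stays at the very end. I would stress that the naive invariant that every cycle source lies on the current path is \emph{false} — a later second-case truncation can delete a vertex that is the source of an earlier cycle — which is precisely why the lemma must be phrased in terms of definedness of $\comp{\CG}$ rather than membership in $V(\path)$.

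With these lemmas in hand, item~(\ref{decinv}) closes quickly. In the first case, commutation and the induction hypothesis give $\comp{\CG}(\path \conn (v,v'), \Gamma) = \comp{\CG}(\path,\Gamma) \conn (v,v') = \walk \conn (v,v')$. In the second case, writing $\cycle^* = (v',v_{m+1},\dots,v,v')$, sequentiality gives $\comp{\CG}\big((v_1,\dots,v_{m-1},v'), \Gamma.\cycle^*\big) = \comp{\CG}\big(\comp{\CG}((v_1,\dots,v_{m-1},v'), (\cycle^*)), \Gamma\big)$; a direct unfolding shows $\comp{\CG}((v_1,\dots,v_{m-1},v'),(\cycle^*)) = \path \conn (v,v')$, since $\cycle^*$ is reinserted at the unique occurrence of $v'$, rebuilding $\path$ and leaving the new edge at the end, and one final application of commutation together with the induction hypothesis yields $\comp{\CG}(\path \conn (v,v'), \Gamma) = \walk \conn (v,v')$, as required.
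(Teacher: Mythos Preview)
Your proposal is correct and follows the same approach as the paper, namely induction on $|\walk|$; the paper's own proof is merely the two lines ``\eqref{decpi}--\eqref{deccy} are obvious by the definition. \eqref{declen}--\eqref{decinv} can be shown by an easy induction on the length of $\walk$.'' Your sequentiality and commutation lemmas for $\comp{\CG}$, together with the observation that the naive invariant ``every cycle source lies on the current path'' fails, are a careful and accurate unpacking of what that ``easy induction'' actually requires for item~\eqref{decinv}, but they do not constitute a different route.
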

\begin{proof}
\eqref{decpi}--\eqref{deccy} are obvious by the definition.
\eqref{declen}--\eqref{decinv} can be shown by an easy induction on the length of $\walk$.
\end{proof}

\begin{proposition}\label{prop:occinv}
Let $w_1, \ldots, w_k \in A^*$ and $N = \max(|w_1|, \ldots, |w_k|)$.
 For any $\walk$ in $\walks(\dg)$,
 \[
 \occv{\walk}{w_1, \ldots, w_k} =
 \occv{\path}{w_1, \ldots, w_k} + \sum_{i = 1}^{|\Gamma|}
 \occv{\Gamma(i)}{w_1, \ldots, w_k}
 \]
 holds where $(\path, \Gamma) = \comp{\dg}(\path, \Gamma)$.
\end{proposition}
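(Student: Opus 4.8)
The plan is to prove the additivity by induction on the number $|\Gamma|$ of cycles, using the fact that $\occv{\cdot}{w_1, \ldots, w_k}$ is additive over the multiset of vertices that a walk visits. Note first that the statement is really about $\comp{\dg}$: by Proposition~\ref{prop:dec}\eqref{decinv} we have $\walk = \comp{\dg}(\path, \Gamma)$ for $(\path, \Gamma) = \dec{\dg}(\walk)$, so it suffices to establish, for every path $\rho$ and every cycle sequence $\Gamma$ for which $\comp{\dg}(\rho, \Gamma)$ is defined, the identity
\[
\occv{\comp{\dg}(\rho, \Gamma)}{w_1, \ldots, w_k}
= \occv{\rho}{w_1, \ldots, w_k}
+ \sum_{i=1}^{|\Gamma|} \occv{\Gamma(i)}{w_1, \ldots, w_k}.
\]

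The starting observation is that, for a walk $\walk = (v_0, v_1, \ldots, v_n)$, the definition $\occv{\walk}{w_1, \ldots, w_k} = \sum_{i=1}^{n} \sufv{v_i}{w_1, \ldots, w_k}$ depends only on the multiset $\{ v_1, \ldots, v_n \}$ of visited vertices \emph{other than the source} $v_0$. Consequently the quantity is invariant under reorderings of these vertices and, crucially, it is additive whenever one walk is spliced into another at a shared vertex.

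The heart of the argument is the single-cycle insertion step. Consider the recursive clause of $\comp{\dg}$: when $\comp{\dg}(\rho, \Gamma') = \path \conn (v) \conn \walk'$ with $\from{\cycle} = v$, one has $\comp{\dg}(\rho, \cycle.\Gamma') = \path \conn \cycle \conn \walk'$. Writing $\cycle = (v, u_1, \ldots, u_m, v)$, splicing $\cycle$ in at the shared vertex $v$ adds exactly the vertices $u_1, \ldots, u_m, v$ to the counted multiset of the walk, since the leading occurrence of $v$ in $\cycle$ coincides with the $v$ already present. But $u_1, \ldots, u_m, v$ are precisely the vertices over which the summation defining $\occv{\cycle}{w_1, \ldots, w_k}$ ranges, because that summation omits the source of $\cycle$. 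Hence inserting $\cycle$ increases the occurrence vector by exactly $\occv{\cycle}{w_1, \ldots, w_k}$.

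With these two facts the induction is immediate: the base case $\Gamma = \emptyseq$ gives $\comp{\dg}(\rho, \emptyseq) = \rho$ together with an empty cycle sum, and the inductive step combines the insertion lemma with the induction hypothesis applied to $\comp{\dg}(\rho, \Gamma')$. I expect the only real obstacle to be the bookkeeping in the insertion step: one must track vertex occurrences under the convention that the summation index runs from $1$ to $n$ (so the source is uncounted), and verify that after splicing the shared vertex $v$ is counted with exactly the right multiplicity. This is routine but is the place where the index conventions must be handled with care.
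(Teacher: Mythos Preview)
Your argument is correct. The key observation you isolate---that $\occv{\walk}{w_1,\ldots,w_k}$ depends only on the multiset of vertices visited after the source, and that splicing in a cycle $\cycle=(v,u_1,\ldots,u_m,v)$ adds precisely the multiset $\{u_1,\ldots,u_m,v\}$, which is exactly what $\occv{\cycle}{w_1,\ldots,w_k}$ sums over---is the right one, and your vertex bookkeeping under the $\conn$ convention checks out.

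The only difference from the paper is the choice of induction variable. The paper inducts on $|\walk|$, unfolding the recursive definition of $\dec{\dg}$ one edge at a time; you instead induct on $|\Gamma|$, unfolding the recursive definition of $\comp{\dg}$ one cycle at a time. Both reduce to the same additivity-over-vertices fact, so neither is deeper than the other. Your route has the minor advantage of proving the identity for \emph{every} $(\rho,\Gamma)$ in the domain of $\comp{\dg}$, not only for pairs arising as $\dec{\dg}(\walk)$; this slightly stronger form is in fact what is used later (e.g.\ in the $(2)\Rightarrow(1)$ direction of Theorem~\ref{thm}, where one applies the identity to walks built directly via $\comp{\dg}$ with prescribed cycle multiplicities).
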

\begin{proof}
Straightforward induction on the length of $\walk$.
\end{proof}

\begin{example}\label{ex:decomp}
Consider the complete graph $\CK_4 = (V_4 = \{1,2,3,4\}, E_4 = V_4
 \times V_4)$ of order $4$ and a walk $\walk = (1,2,3,2,3,4,3,4,2,4)$.
 The result of decomposition is $\dec{\CK_4}(\walk) = (\walk = (1,2,4),
 \Gamma = ((2,3,2), (3,4,3), (2,3,4,2)))$.
 All intermediate computation step of $\dec{{\cal K}_4}(\walk)$
 and $\comp{\CK_4}(\dec{\CK_4}(\walk))$ are drawn in
 Fig.~\ref{fig:decomp} (in the figure we denote by $\path\&\Gamma$ a
 pair $(\path, \Gamma)$ for visibility).
 It is clear that the all conditions in Proposition~\ref{prop:dec} are
 satisfied ($|\walk| = 9 = 2+2+2+3 = |\walk| + \sum_{i = 1}^{3}|\Gamma(i)|$, in particular).
\end{example}

\subsection{Multi-Traces and Traces}
For a walk $\walk$ in a graph $\CG$,
we define the \emph{multi-trace} $\mtr(\walk): \paths(\CG) \cup \cycles(\CG)
\rightarrow \nat$ of a walk $\walk$ as the following multiset over paths and cycles:
\begin{align*}
 (\mtr(\walk))(\path) \defeq \begin{cases}
							 1 & \text{ if } \path = \path_\walk\\
							 0 & \text{ otherwise}
							\end{cases}
							&\qquad\qquad\qquad
							(\mtr(\walk))(\cycle) \defeq |\Gamma|_{\cycle}\\
\text{where } & (\path_\walk, \Gamma) = \dec{\CG}(\walk).
\end{align*}
We define the \emph{trace} $\tr(\walk)$ of a walk $\walk$ in $\CG$ as
the following set of paths and cycles:
\begin{align*}
 \tr(\walk) \defeq \{ \path \in \paths(\CG) \mid
 (\mtr(\walk))(\path) \neq 0 \} \cup
 \{ \cycle \in \cycles(\CG) \mid (\mtr(\walk))(\cycle) \neq 0 \}.
\end{align*}
Intuitively, the multi-trace of $\walk$ in $\CG$ is obtained by
forgetting the ordering of the decomposition result $(\walk, \Gamma) =
\dec{\CG}(\walk)$ of $\walk$,
and the trace of $\walk$ is obtained by forgetting the multiplicity from
the original multi-trace (see Fig.~\ref{fig:rel} for the relation).

\begin{figure}[t]
 \centering\includegraphics[width=1\columnwidth]{./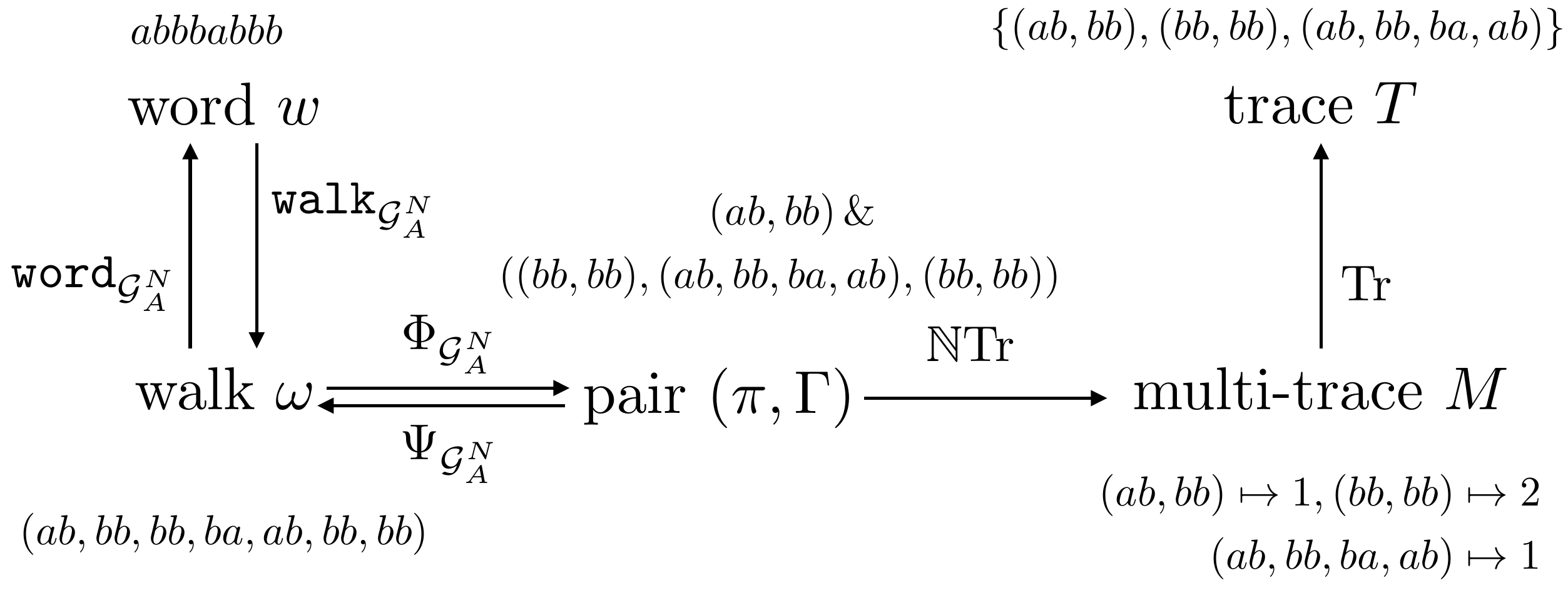}
 \caption{Relations between words, walks and (multi-)traces ($N = 2$ for the examples).}
 \label{fig:rel}
\end{figure}

Since Item~\ref{declen} in Proposition~\ref{prop:dec} and
Proposition~\ref{prop:occinv}
do not depend on the order of a sequence $\Gamma$, one can easily observe that the
following proposition holds by the definition of $\mtr(\walk)$.
\begin{proposition}\label{prop:mtrace}
Let $w_1, \ldots, w_k \in A^*$ and $N = \max(|w_1|, \ldots, |w_k|)$.
For any $\walk$ in $\walks(\dg)$, we have
\[
\occv{\walk}{w_1, \ldots, w_k} = \!\!\! \sum_{\path \in
 \paths(\dg)} \!\!\!
 (\mtr(\walk))(\path) \cdot \occv{\path}{w_1, \ldots, w_k} + \!\!\!
 \sum_{\cycle \in \cycles(\dg)} \!\!\!
 (\mtr(\walk))(\cycle) \cdot \occv{\cycle}{w_1, \ldots, w_k}
\]
 and
 \[
|\walk| = \sum_{\path \in
 \paths(\dg)} \!\!\!
 (\mtr(\walk))(\path) \cdot |\path| + \!\!\!
 \sum_{\cycle \in \cycles(\dg)} \!\!\!
 (\mtr(\walk))(\cycle) \cdot |\cycle|.
 \]
\end{proposition}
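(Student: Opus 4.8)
The plan is to obtain both identities by regrouping, according to multiplicity, the decomposition-level equalities already established: Proposition~\ref{prop:occinv} for the occurrence vectors and Item~\ref{declen} of Proposition~\ref{prop:dec} for the lengths. Throughout I would fix a walk $\walk \in \walks(\dg)$ and write $(\path_\walk, \Gamma) = \dec{\dg}(\walk)$ with $\Gamma = (\cycle_1, \ldots, \cycle_m)$, so that $(\mtr(\walk))(\path_\walk) = 1$, $(\mtr(\walk))(\path) = 0$ for every other path $\path$, and $(\mtr(\walk))(\cycle) = |\Gamma|_\cycle$ for each cycle $\cycle$.

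For the occurrence identity I would start from Proposition~\ref{prop:occinv}, which gives $\occv{\walk}{w_1, \ldots, w_k} = \occv{\path_\walk}{w_1, \ldots, w_k} + \sum_{i=1}^{m} \occv{\Gamma(i)}{w_1, \ldots, w_k}$. The key point is that each summand $\occv{\Gamma(i)}{w_1, \ldots, w_k}$ depends only on the cycle $\Gamma(i)$ and not on its position $i$ in the sequence, so the ordered sum over $\Gamma$ can be collected cycle by cycle. Since every cycle $\cycle \in \cycles(\dg)$ occurs exactly $|\Gamma|_\cycle = (\mtr(\walk))(\cycle)$ times in $\Gamma$ and $\cycles(\dg)$ is finite, this regrouping turns $\sum_{i=1}^{m}$ into $\sum_{\cycle \in \cycles(\dg)} (\mtr(\walk))(\cycle) \cdot \occv{\cycle}{w_1, \ldots, w_k}$. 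For the path contribution, the definition of $\mtr(\walk)$ assigns multiplicity $1$ to $\path_\walk$ and $0$ to every other path, so $\occv{\path_\walk}{w_1, \ldots, w_k}$ equals $\sum_{\path \in \paths(\dg)} (\mtr(\walk))(\path) \cdot \occv{\path}{w_1, \ldots, w_k}$. Substituting both rewrites yields the first displayed identity.

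The length identity follows by exactly the same bookkeeping applied to Item~\ref{declen} of Proposition~\ref{prop:dec}, namely $|\walk| = |\path_\walk| + \sum_{i=1}^{m} |\Gamma(i)|$: again $|\Gamma(i)|$ depends only on the cycle and not on its position, so collecting by multiplicity replaces the ordered sum by $\sum_{\cycle \in \cycles(\dg)} (\mtr(\walk))(\cycle) \cdot |\cycle|$ and the single path term by the corresponding $\paths(\dg)$-indexed sum.

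I expect no real difficulty here: everything reduces to the elementary fact that summing a function over a finite multiset equals summing over the underlying set weighted by multiplicities, and this multiset is precisely what $\mtr(\walk)$ records. The only step deserving an explicit sentence — the nearest thing to an obstacle — is justifying that the per-cycle summands are order-independent, so that passing from the sequence-indexed sums of Propositions~\ref{prop:dec} and~\ref{prop:occinv} to the set-indexed sums in the statement is legitimate; this is exactly the order-independence remark preceding the proposition.
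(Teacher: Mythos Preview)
Your argument is correct and matches the paper's own justification exactly: the paper simply observes that the sums in Item~\ref{declen} of Proposition~\ref{prop:dec} and in Proposition~\ref{prop:occinv} are order-independent and then regroups by the multiplicities recorded in $\mtr(\walk)$, which is precisely what you do. There is nothing to add.
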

For a set $T \subseteq \paths(\CG) \cup \cycles(\CG)$,
the following lemma states that we can effectively test whether $T$ is a
trace or not (see Fig.~\ref{fig:trace} for the intuition).
\begin{lemma}\label{lem:trace}
Let $T \subseteq \paths(\CG) \cup \cycles(\CG)$ be a set of paths and
 cycles in a graph $\CG = (V, E)$. The followings are equivalent:
 \begin{enumerate}
  \item $T$ is a trace of some walk in $\CG$.
  \item $T$ can be written as $T = \{\path\} \cup \{\cycle_1, \ldots, \cycle_m\}$
		such that
		(i) $V(\cycle_m) \cap V(\path) = \{\from{\cycle_m}\}$ and (ii)
		for every $i \in \{1, \ldots, m-1\}$,
		$V(\path') \cap V(\cycle_i) = \{\from{\cycle_i}\}$
		where $\path' \conn (\from{\cycle_i})
		\conn \walk = \comp{\CG}(\path, (\cycle_{i+1}, \ldots, \cycle_m)).$ \label{con:trace}
 \end{enumerate}
\end{lemma}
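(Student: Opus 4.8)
The plan is to prove the two directions of the equivalence in Lemma~\ref{lem:trace} separately, using the path-cycle decomposition machinery established in Section~\ref{sec:decomp} as the main tool.

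\paragraph{Plan of proof.}
The plan is to prove the biconditional in Lemma~\ref{lem:trace} by establishing each implication separately, relying on the composition function $\comp{\CG}$ and its inverse relationship with $\dec{\CG}$ from Proposition~\ref{prop:dec}.

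\paragraph{Direction \eqref{con:trace} $\Rightarrow$ (1).}
First I would show that any $T$ satisfying condition~\eqref{con:trace} is indeed the trace of some walk. The natural candidate is the walk obtained by composing all the listed cycles onto the path, namely $\walk \defeq \comp{\CG}(\path, (\cycle_1, \ldots, \cycle_m))$. The key point is that condition~\eqref{con:trace} is precisely engineered to guarantee that this composition is \emph{defined} at every stage: conditions (i) and (ii) ensure that when we attach $\cycle_i$, its source $\from{\cycle_i}$ actually lies on the intermediate walk assembled so far, so the ``undefined'' branch in the definition of $\comp{\CG}$ is never triggered. I would argue by a downward induction on $i$ that $\comp{\CG}(\path, (\cycle_i, \ldots, \cycle_m))$ is defined, using (ii) at each step (and (i) as the base case involving the path itself). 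Once $\walk$ is well-defined, I would verify that $\dec{\CG}(\walk) = (\path, (\cycle_1, \ldots, \cycle_m))$ (up to the reordering that the multi-trace forgets), so that $\tr(\walk) = T$ by the definitions of $\mtr$ and $\tr$.

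\paragraph{Direction (1) $\Rightarrow$ \eqref{con:trace}.}
Conversely, suppose $T = \tr(\walk)$ for some walk $\walk$, and let $(\path_\walk, \Gamma) = \dec{\CG}(\walk)$ with $\Gamma = (\cycle_1, \ldots, \cycle_m)$ (collapsing repeated cycles, since $T$ forgets multiplicity). By Proposition~\ref{prop:dec}\eqref{decinv} we have $\walk = \comp{\CG}(\path_\walk, \Gamma)$, and I would read off conditions (i) and (ii) directly from the definition of $\comp{\CG}$: each time a cycle $\cycle_i$ is spliced in, its source must already appear on the partially reconstructed walk, and the splicing intersects that walk in exactly the single vertex $\from{\cycle_i}$ because a cycle is internally a path disjoint (except at its endpoint) from where it is attached. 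Establishing that the intersection is \emph{exactly} the singleton $\{\from{\cycle_i}\}$, rather than merely containing it, is the delicate part and follows from the way $\dec{\CG}$ peels off a cycle only when a repeated vertex is first encountered, keeping the remaining path simple.

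\paragraph{Main obstacle.}
I expect the main obstacle to be the bookkeeping around the intermediate walks $\comp{\CG}(\path, (\cycle_{i+1}, \ldots, \cycle_m))$ and the precise claim that each cycle meets the already-built structure in exactly one vertex. The subtlety is that the decomposition is order-sensitive while the trace is order-insensitive, so I must be careful that condition~\eqref{con:trace} refers to \emph{one} valid ordering of the cycles rather than all of them; the existential form of \eqref{con:trace} (``$T$ can be written as\ldots'') is what makes this work. Getting the induction hypotheses aligned between the forward and backward directions, and confirming that the singleton-intersection conditions are genuinely equivalent to composability of $\comp{\CG}$, will require the most care, but no step involves more than an induction on $|\walk|$ or on the number of cycles.
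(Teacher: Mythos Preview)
The paper states Lemma~\ref{lem:trace} without proof (the statement is followed immediately by Figure~\ref{fig:trace} and then Section~\ref{sec:proof}), so there is no authorial argument to compare against; your plan is the natural one and is essentially sound.

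Two points deserve more precision than the proposal gives them. First, in the direction $(1)\Rightarrow(2)$ you write ``let $(\path_\walk,\Gamma)=\dec{\CG}(\walk)$ with $\Gamma=(\cycle_1,\ldots,\cycle_m)$ (collapsing repeated cycles)'', but $\Gamma$ may genuinely contain repetitions, and deleting duplicates in an arbitrary order need not preserve conditions (i)--(ii). You must specify which ordering of the $m$ distinct cycles you keep---e.g.\ by the index of the \emph{last} occurrence of each cycle in $\Gamma$, or, more cleanly, by first replacing $\walk$ with the shorter walk $\comp{\CG}(\path_\walk,\Gamma')$ where $\Gamma'$ retains exactly one copy of each cycle (in a suitable order) and then arguing that its decomposition has no repeats. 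Either route works, but neither is automatic.

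Second, in your ``Main obstacle'' paragraph you describe the singleton-intersection conditions as ``genuinely equivalent to composability of $\comp{\CG}$''. They are not: composability of $\comp{\CG}(\path,(\cycle_1,\ldots,\cycle_m))$ only requires $\from{\cycle_i}\in V(\comp{\CG}(\path,(\cycle_{i+1},\ldots,\cycle_m)))$, which is strictly weaker. The \emph{equality} $V(\path')\cap V(\cycle_i)=\{\from{\cycle_i}\}$ is what additionally guarantees that $\dec{\CG}$ applied to the composed walk peels off exactly $\cycle_i$ (rather than some rotation or some other cycle sharing a vertex with $\path'$), so that the resulting trace is $T$ and not merely some trace. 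This is precisely the phenomenon illustrated on the right of Figure~\ref{fig:trace}. Your inductive argument for $(2)\Rightarrow(1)$ should therefore verify not only that $\comp{\CG}$ is defined but that $\dec{\CG}\circ\comp{\CG}$ returns $(\path,(\cycle_1,\ldots,\cycle_m))$; the singleton condition is exactly the hypothesis that makes the inductive step go through.
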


\begin{figure}[t]
 \centering\includegraphics[width=0.85\columnwidth]{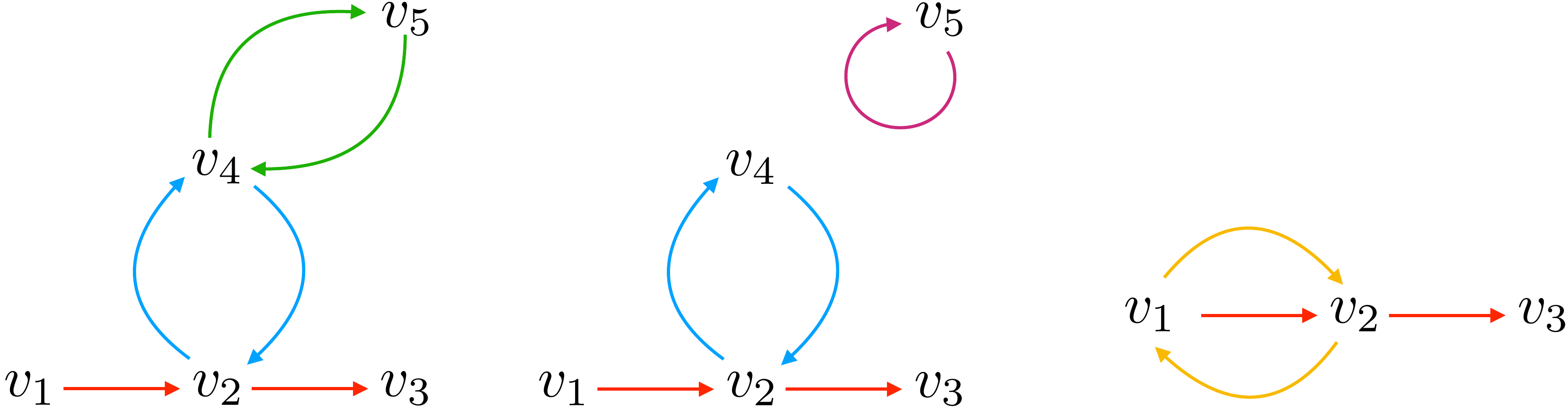}
 \caption{({\bf left}) $\{\path \!\!=\!\! (v_1,v_2,v_3)\} \cup \{ (v_4, v_5, v_4),
 \cycle_2 \!\!=\!\! (v_2, v_4, v_2) \}$ is a trace.
 ({\bf middle}) $\{
 \path \} \cup \{ (v_5, v_5), \cycle_2\}$ is not connected thus not a trace. ({\bf right})
 $\{\path\} \cup \{(v_1, v_2, v_1)\}$ is a trace and
 $\{\path\} \cup \{(v_2, v_1, v_2)\}$ is connected but not a trace (from Condition~(2-i) in Lemma~\ref{lem:trace}).
 }
 \label{fig:trace}
\end{figure}
\section{Characterisation of the Finiteness}\label{sec:proof}
For a vector $\bm{v} = (c_1, \ldots, c_k) \in \nat^k$, we define
\[
 \diff(\bm{v}) \defeq \sum_{i = 1}^k (\max(c_1, \ldots, c_k) - c_i).
\]
Observe that $w \in \wmix{w_1, \ldots, w_k}$ if and only if
$\diff(\occv{w}{w_1, \ldots, w_k}) = 0$.

We now ready to state and prove the main result.
\begin{theorem}\label{thm}
 Let $w_1, \ldots, w_k \in A^*$ and $N = \max(|w_1|, \ldots, |w_k|)$.
 Then the followings are equivalent:
 \begin{enumerate}
  \item $\wmix{w_1, \ldots, w_k}$ is infinite.
  \item There exists a trace $T \subseteq \paths(\dg) \cup \cycles(\dg)$ that satisfies the following two
		conditions.
		By Lemma~\ref{lem:trace}, we can assume without loss of generality that $T$ is of the form $T =
		\{\path\} \cup \{\cycle_1, \ldots, \cycle_m\}$ that satisfies
		Condition~\eqref{con:trace} in Lemma~\ref{lem:trace}.
		\\
		\noindent{\bf (balance condition)}
		there exist positive coefficients
					$x_1, \ldots, x_m \in \nat$, $x_i \geq 1$ for
					each $i \in \{1, \ldots, m\}$, such that
					\[
		\diff\left( \occv{\from{\path}}{w_1, \ldots, w_k} + \occv{\path}{w_1,
		\ldots, w_k} + \sum_{i = 1}^m x_i \cdot \occv{\cycle_i}{w_1,
		\ldots, w_k} \right) = 0.
					\]
		\noindent{\bf (pumping condition)}
					there exist coefficients
					$y_1, \ldots, y_m \in \nat$, not all zero, such that
					\[
					\diff\left( \sum_{i = 1}^{m} y_i \cdot \occv{\cycle_i}{w_1,	\ldots, w_k} \right) = 0.
					\]
 \end{enumerate}
\end{theorem}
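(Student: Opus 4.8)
The plan is to prove the two implications separately, using the path--cycle decomposition and the invariance of the occurrence vector under it (Propositions~\ref{prop:occinv} and~\ref{prop:mtrace}) as the bridge between words in $\wmix{w_1,\ldots,w_k}$ and walks in $\dg$. Throughout I abbreviate $\bm{u} = \occv{\from{\path}}{w_1,\ldots,w_k} + \occv{\path}{w_1,\ldots,w_k}$ and $C_i = \occv{\cycle_i}{w_1,\ldots,w_k}$, and I use the elementary fact that $\diff(\bm{v}) = 0$ holds exactly when all coordinates of $\bm{v}$ coincide. Consequently $\diff$ vanishes on any sum of ``constant'' vectors, which is the additivity that drives both directions.

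For $(2) \Rightarrow (1)$ I would exhibit an infinite family of words directly. Given $T = \{\path\} \cup \{\cycle_1,\ldots,\cycle_m\}$ with balance coefficients $x_i \geq 1$ and pumping coefficients $y_i \geq 0$ (not all zero), I first note that, since $T$ is a trace, each $\cycle_i$ is insertable into $\path$ at $\from{\cycle_i}$, and insertion only adds vertices, so a once-insertable cycle can be inserted arbitrarily often without spoiling the insertability of the others. Hence for every $t \in \nat$ there is a walk $\walk_t = \comp{\dg}(\path,\Gamma_t)$ whose multiset of cycles contains each $\cycle_i$ with multiplicity $x_i + t\,y_i$, and $\from{\walk_t} = \from{\path}$. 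Since the occurrence vector is additive over the path and the inserted cycles (the composition counterpart of Proposition~\ref{prop:occinv}), together with Proposition~\ref{prop:dg} the word $u_t = \Word(\walk_t)$ satisfies $\occv{u_t}{w_1,\ldots,w_k} = \bm{u} + \sum_{i} x_i C_i + t \sum_{i} y_i C_i$. The balance condition makes $\bm{u} + \sum_i x_i C_i$ constant and the pumping condition makes $\sum_i y_i C_i$ constant, so their $t$-scaled sum is constant and $\diff(\occv{u_t}{w_1,\ldots,w_k}) = 0$, i.e.\ $u_t \in \wmix{w_1,\ldots,w_k}$. Because some $y_i$ is positive and cycles are non-empty, $|u_t|$ strictly increases with $t$, so the $u_t$ are pairwise distinct and the language is infinite.

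For $(1) \Rightarrow (2)$ I would run a pigeonhole-plus-well-quasi-order extraction. Only finitely many words have length $\leq N$, so infiniteness gives infinitely many $u \in \wmix{w_1,\ldots,w_k}$ with $|u| > N$; factoring $u = v w$ with $|v| = N$ and $w$ non-empty yields a walk $\walk = \Walk(v,w)$ with $u = \Word(\walk)$, and distinct words give distinct walks. By Proposition~\ref{prop:dg} every such walk satisfies $\diff(\occv{v}{w_1,\ldots,w_k} + \occv{\walk}{w_1,\ldots,w_k}) = 0$. As $\dg$ is finite it has only finitely many paths and cycles, hence finitely many possible traces, so infinitely many of these walks share a single trace $T = \{\path\}\cup\{\cycle_1,\ldots,\cycle_m\}$, which also fixes $\path$ and the source $\from{\path} = v$. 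Attaching to each such walk its cycle-multiplicity vector $\bm{m} = ((\mtr(\walk))(\cycle_1),\ldots,(\mtr(\walk))(\cycle_m)) \in \nat^m$ (all entries $\geq 1$), these vectors must be unbounded: a bounded set of multiplicities would give only finitely many multi-traces, and each multi-trace is realised by at most the finitely many orderings of its cycle multiset composed into $\path$ (using $\comp{\dg}\circ\dec{\dg} = \mathrm{id}$ from Proposition~\ref{prop:dec}), contradicting that there are infinitely many walks.

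Thus the multiplicity vectors form an infinite set of distinct elements of $\nat^m$, and since the point-wise order $\pleq$ is a well-quasi-order (Dickson's lemma), the corresponding sequence contains a strictly increasing pair $\bm{a} \plneq \bm{b}$ arising from two walks, both of which lie in $\wmix{w_1,\ldots,w_k}$. By Proposition~\ref{prop:mtrace} these two memberships read $\diff(\bm{u} + \sum_i a_i C_i) = 0$ and $\diff(\bm{u} + \sum_i b_i C_i) = 0$, so both vectors are constant; subtracting shows $\sum_i (b_i - a_i) C_i$ is constant as well. Taking $x_i = a_i$ (each $\geq 1$) gives the balance condition, and $y_i = b_i - a_i$ (each $\geq 0$, not all zero because $\bm{a} \neq \bm{b}$) gives the pumping condition. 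I expect the main obstacle to be this last extraction: forcing the multiplicities to be unbounded (the ``finitely many orderings per multi-trace'' count) and, crucially, guaranteeing that the extracted pair is \emph{strict}, since a non-strict pair would make every $y_i$ zero and the family in $(2)\Rightarrow(1)$ would fail to grow. Distinctness of the vectors fed into the well-quasi-order is exactly what rules this out.
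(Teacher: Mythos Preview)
Your proof is correct and follows essentially the same approach as the paper: in $(2)\Rightarrow(1)$ you build walks with cycle multiplicities $x_i + t\,y_i$ exactly as the paper does, and in $(1)\Rightarrow(2)$ both arguments extract a strictly comparable pair of multi-traces via Dickson's lemma and read off the balance and pumping coefficients as the smaller vector and the difference. The only cosmetic differences are that you pigeonhole on the trace first and then apply Dickson to the cycle-multiplicity vectors (whereas the paper applies the wqo directly to multi-traces over $\paths(\dg)\cup\cycles(\dg)$ and uses strictly increasing word lengths to force $M_{i_1}\neq M_{i_2}$), and you take $x_i$ from the smaller of the two vectors while the paper takes it from the larger --- both choices work since both underlying words lie in $\wmix{w_1,\ldots,w_k}$.
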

\begin{proof}
 The direction $(2) \Rightarrow (1)$ is relatively easy.
 Intuitively, the balance condition ensures
 the existence of a word $v u_0 \in A^*$ such that $|v| = N$, $v u_0 \in
 \wmix{w_1, \ldots, w_k}$, and the pumping condition further ensures
 $v u_0$ is ``pumpable'' in some sense, which implies the infiniteness
 of $\wmix{w_1, \ldots, w_k}$.
 We prove this intuition. Assume a
 trace $T = \{\path \} \cup \{\cycle_1, \ldots,
 \cycle_m\}$ satisfies the balance and pumping conditions and let $v = \from{\path}$.
 Since $T$ is a trace, $\walk = \comp{\dg}(\path, (\cycle_1, \ldots,
 \cycle_m))$ is defined by Lemma~\ref{lem:trace}.
 Let $x_1, \ldots, x_m$ be positive coefficients that satisfy
 the balance condition, and $y_1, \ldots, y_m$ be coefficients, not
 all zero, that satisfy the pumping condition.
 For each $n \geq 0$, we define $\walk_n \defeq \comp{\dg}(\path,
 (\cycle_1^{x_1+n \cdot y_1}, \ldots, \cycle_m^{x_m+n \cdot y_m}))$
 and $v u_n \defeq \Word(\walk_n)$.
 Note that, since every cycle has at least length one and by
 Item~\ref{declen} in Proposition~\ref{prop:dec}, $|\walk_n| < |\walk_{n'}|$ and hence $|v u_n| < |v u_{n'}|$ holds for every $n < n'$.
 Combining Proposition~\ref{prop:dg}, Proposition~\ref{prop:occinv} and
 the balance condition, we obtain
\begin{align*}
 & \diff(\occv{v u_0}{w_1,\ldots,w_k})
 = \diff(\occv{v}{w_1,\ldots,w_k} +
 \occv{\walk_0}{w_1,\ldots,w_k}) \\
 = \, & \diff\left(
 \occv{v}{w_1,\ldots,w_k}
 + \occv{\path}{w_1,\ldots,w_k}
 + \sum_{i = 1}^{m} x_i \cdot \occv{\cycle_i}{w_1,\ldots,w_k} \right)  = 0.
\end{align*}
 Moreover, by Proposition~\ref{prop:dg}, Proposition~\ref{prop:occinv} and the pumping condition we have
\begin{align*}
 & \diff(\occv{v u_n}{w_1,\ldots,w_k}) =  \diff(\occv{v}{w_1,\ldots,w_k} + \occv{\walk_n}{w_1,\ldots,w_k})\\
= \, & 
 \diff\left(\occv{v}{w_1,\ldots,w_k} + 
 \occv{\walk_0}{w_1, \ldots, w_k} + n \sum_{i = 0}^{m}
 y_i \cdot \occv{\cycle_i}{w_1, \ldots, w_k}
 \right) = 0
\end{align*}
 for any $n \geq 1$.
 This means that every distinct word $v u_n$ is in $\wmix{w_1, \ldots, w_k}$, hence $\wmix{w_1, \ldots, w_k}$ is infinite.

We then prove the opposite direction $(1) \Rightarrow (2)$.
 Assume $\card{\wmix{w_1, \ldots, w_k}} = \infty$.
 Since $\wmix{w_1, \ldots, w_k}$ is infinite and hence it contains an arbitrary long word, we can
 take an infinite sequence $(v_i u_i)_{i \in \nat}$ of words from $\wmix{w_1, \ldots, w_k}$ that satisfies
 $|v_i| = N \, (i \in \nat)$ ,$v_i u_i \in
 \wmix{w_1, \ldots, w_k} \, (i \in \nat)$ and $|v_i u_i| < |v_j u_j| \, (i < j)$.
 Now consider an infinite sequence of multi-traces
 \[
(M_i)_{i \in \nat} \defeq (\mtr(\Walk(v_i, u_i)))_{i \in \nat}.
 \]
 Since the point-wise order on the multisets over any finite set is a wqo (thanks to Dickson's lemma) and
 $\paths(\dg) \cup \cycles(\dg)$ is finite,
 $(M_i)_{i \in \nat}$ contains an
 increasing pair $M_{i_1} \pleq M_{i_2}$
 with $i_1 < i_2$.
 Define $T_i \defeq \tr(\Walk(v_i, u_i) \, (i \in \{i_1, i_2\})$.
 We notice that, since
 $M_{i_1} \pleq M_{i_2}$
 and every multi-trace contains exactly one path as its
 element,
 $T_{i_1} \subseteq T_{i_2}$ and $(T_{i_1} \cap T_{i_2}) \cap
 \paths(\dg) = \{\path\}$ for some path $\path$.
 Since $|v_i u_i| < |v_j u_j|$ holds for every $i < j$ by definition, we can
 deduce $M_{i_1} \neq M_{i_2}$ by Proposition~\ref{prop:mtrace} and thus
 we have
\begin{align}
 M_{i_1}(\cycle) < M_{i_2}(\cycle) \text{ for some cycle } \cycle \in
 T_{i_2}. \tag{$\bigstar$}\label{star}
 \end{align}
 Let $m$ be the number of cycles in $T_{i_2}$ and
 write $T_{i_2} = \{\path\} \cup \{ \cycle_1, \ldots, \cycle_m \}$.
 Define $x_i \defeq M_{i_2}(\cycle_i)$ and
 $x'_i \defeq M_{i_1}(\cycle_i)$ for each $i \in \{1, \ldots, m\}$.
 Clearly $x_i \geq 1$
 and $x_i \geq x'_i$ hold for every $i \in \{1, \ldots, m\}$ by the definition.
 By Proposition~\ref{prop:dg} and Proposition~\ref{prop:mtrace}, we have
 \begin{align*}
  & \diff \left(\occv{\from{\path}}{w_1, \ldots, w_k} +
  \occv{\path}{w_1, \ldots, w_k} + \sum_{i = 1}^{m} x_i
  \cdot \occv{\cycle_i}{w_1, \ldots, w_k}
  \right) \\
=\,&  \diff\left(\occv{\from{\path}}{w_1, \ldots, w_k} + \occv{\path}{w_1, \ldots, w_k} + \sum_{i = 1}^{m} x'_i
 \cdot \occv{\cycle_i}{w_1, \ldots, w_k}\right)\\
  =\, & \diff(\occv{\from{\path} u_{i_2}}{w_1,\ldots, w_k}) = \diff(\occv{\from{\path}
  u_{i_1}}{w_1, \ldots, w_k}) =  0,
 \end{align*}
that is, the balance condition is satisfied.
In addition, from the above equation we obtain
\begin{align*}
 \diff\left( \sum_{i = 1}^{m} (x_i - x'_i) \cdot
 \occv{\cycle_i}{w_1, \ldots, w_k} \right) = 0
\end{align*}
 because for any $\bm{v}$ such that $\diff(\bm{v}) = 0$, $\diff(\bm{v} + \bm{v'}) = 0$ if
 and only if $\diff(\bm{v}') = 0$.
 By Condition~\eqref{star},
 coefficients $(x_1 - x'_1)$, \ldots, $(x_m - x'_m)$ are not all zero, the pumping condition is satisfied. This ends the proof. \qed
\end{proof}

\subsection{Decidability and Examples}\label{sec:ex}
The decision problem whether both balance and pumping condition are
satisfied for a given trace $T \subseteq \paths(\dg) \cup \cycles(\dg)$ in $\dg$
can be reduced into $\mathrm{\Sigma}_1$-formula (existential formula) of
Presburger arithmetic (see the examples in below). The set of traces in $\dg$ is clearly
finite and effectively enumerable (due to Lemma~\ref{lem:trace}), in addition. Thus we obtain the following
corollary.
\begin{corollary}\label{cor:inf}
For all words $w_1, \cdots, w_k \in A^*$, it is decidable whether
$\wmix{w_1, \ldots, w_k}$ is infinite or not.
\end{corollary}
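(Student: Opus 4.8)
The plan is to read an algorithm directly off Theorem~\ref{thm}. By that theorem, $\wmix{w_1, \ldots, w_k}$ is infinite if and only if there is a trace $T \subseteq \paths(\dg) \cup \cycles(\dg)$ satisfying both the balance and the pumping conditions. Since $\dg$ has vertex set $A^N$ and is therefore finite, both $\paths(\dg)$ and $\cycles(\dg)$ are finite, so there are only finitely many candidate subsets $T$ to inspect, and all of them can be listed explicitly. For each candidate, Lemma~\ref{lem:trace} provides an effective test of whether it is genuinely a trace and, when it is, puts it in the normal form $T = \{\path\} \cup \{\cycle_1, \ldots, \cycle_m\}$ that Theorem~\ref{thm} requires. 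Hence the search space is finite and effectively enumerable, and the whole question reduces to deciding, for one fixed trace at a time, whether the two numerical conditions can be met.

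For a fixed trace this is where I would do the real work: I claim each condition is an existential statement over linear arithmetic. First I would compute the constant vectors $\bm{a} \defeq \occv{\from{\path}}{w_1, \ldots, w_k} + \occv{\path}{w_1, \ldots, w_k}$ and $\bm{b}_i \defeq \occv{\cycle_i}{w_1, \ldots, w_k}$ for $i \in \{1, \ldots, m\}$; these are computable straight from $T$, since $\path$ and each $\cycle_i$ is a concrete finite walk in $\dg$ whose occurrence vector can be evaluated. Writing $\bm{a} = (a_1, \ldots, a_k)$ and $\bm{b}_i = (b_{i,1}, \ldots, b_{i,k})$ and using the observation that $\diff(\bm{v}) = 0$ holds exactly when all components of $\bm{v}$ coincide, I would rephrase the balance condition as the $\Sigma_1$-formula
\[
\exists x_1, \ldots, x_m \Big( \textstyle\bigwedge_i x_i \geq 1 \wedge \bigwedge_{1 \leq j < j' \leq k} a_j + \sum_i x_i b_{i,j} = a_{j'} + \sum_i x_i b_{i,j'} \Big)
\]
and the pumping condition as
\[
\exists y_1, \ldots, y_m \Big( \textstyle\bigvee_i y_i \geq 1 \wedge \bigwedge_{1 \leq j < j' \leq k} \sum_i y_i b_{i,j} = \sum_i y_i b_{i,j'} \Big),
\]
both ranging over $\nat$-valued variables.

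Since satisfiability of existential Presburger formulas is decidable, each test terminates, so the conjunction of the two conditions is decidable for every fixed $T$. The final procedure then loops over all traces, tests both formulas for each, answers ``infinite'' as soon as some trace passes both, and ``finite'' if none does; its correctness is immediate from Theorem~\ref{thm}. The main point to be careful about is that the coefficients $x_i$ and $y_i$ range over an unbounded domain, so a naive bounded search would not be sound; this is exactly what the appeal to $\Sigma_1$-Presburger decidability settles (equivalently, one may regard each condition as an integer-feasibility question for a system of linear Diophantine equations with sign constraints). Two smaller points I would check are that the pumping condition's requirement that the $y_i$ be \emph{not all zero} is faithfully encoded by the disjunction $\bigvee_i y_i \geq 1$, and that the balance condition's requirement of \emph{strictly positive} $x_i$ is encoded by $\bigwedge_i x_i \geq 1$.
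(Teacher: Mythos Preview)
Your proposal is correct and follows essentially the same route as the paper: enumerate the finitely many traces in $\dg$ (using Lemma~\ref{lem:trace} to recognise them), and for each trace reduce the balance and pumping conditions to $\Sigma_1$-formulae of Presburger arithmetic. The paper's own proof is in fact terser than yours; the explicit shape of the Presburger formulae you write out matches what the paper only illustrates by example in Section~\ref{sec:ex}.
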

\begin{proof}
Enumerate possible traces in $\dg$
 and check whether there is a trace that satisfies both balance and pumping condition.
\end{proof}

\begin{example}\label{ex1}
 Consider the language $\wmix{ab, ba, a}$ over $A = \{a, b\}$,
 $\max(|ab|, |ba|, |a|) = 2$ and the 2-dimensional de Bruijn graph
 $\dgp{2}$ shown in Fig.~\ref{fig:dg}.
 We claim that a trace $T_1 = \{\path_1 = (ba, ab)\} \cup \{ \cycle_1 = (ba, ab, ba)\}$ satisfies both balance and pumping condition.
 One can easily observe that
 \begin{align*}
  \occv{ba}{ab,ba,a} = (0,1,1) \quad
  \occv{\path_1}{ab,ba,a} = (1,0,0) \quad
  \occv{\cycle_1}{ab,ba,a} = (1,1,1)
 \end{align*}
 and hence the coefficient $x_1 = 1$ simultaneously satisfies the two
 condition stated in (2) of Theorem~\ref{thm}.
 For each $n \geq 1$, by Proposition~\ref{prop:dg} the word
 $ba (ba)^n b = \Word(\cycle_1^n \conn \path_1)$ is in $\card{\wmix{ab, ba, a}}
 = \infty$. Hence $ba (ba)^+ b \subseteq \wmix{ab, ba, a}$ and
 $\card{\wmix{ab, ba, a}} = \infty$.
\end{example}

\begin{example}\label{ex2}
Next consider another language $\wmix{ab, ba, a, b}$ over $A = \{a, b\}$,
 $\max(|ab|, |ba|, |a|, |b|) = 2$ and again the 2-dimensional de Bruijn graph
 $\dgp{2}$ shown in Fig.~\ref{fig:dg}.
 In contrast with Example~\ref{ex1}, the trace
 $T_1 = \{\path_1 = (ba, ab)\} \cup \{\cycle_1 = (ba, ab, ba)\}$
 does not satisfy the balance condition any more (even it still satisfies the
 pumping condition).
 We have
 \begin{align*}
  \occv{ba}{ab,ba,a,b} =& (0,1,1,1) \qquad
  \occv{\path_1}{ab,ba,a,b} = (1,0,0,1) \\
  \occv{\cycle_1}{ab,ba,a,b} =& (1,1,1,1)
 \end{align*}
We can formally prove that there is no positive coefficient $x_1 \in
 \nat \, (x_1 > 0)$ that satisfies the balance condition,
 since the existence of such coefficients can be expressed in the
 following $\mathrm{\Sigma}_1$-formula of Presburger arithmetic
  \begin{align*}
   \phi_{T_1} \defeq
   \exists c \Bigl( \exists x_1 \bigl(& x_1 > 0 \, \land \theta_{T_1}^{ab} = c \land \theta_{T_1}^{ba} = c
   \land \theta_{T_1}^{a} = c \land
   \theta_{T_1}^{b} = c \bigr) \Bigr)\\
   \equiv
 \exists c \Bigl( \exists x_1 \bigl(& x_1 > 0 \, \land \\
  & (0+1+x_1) = c \land (1+0+x_1) = c \, \land\\
  & (1+0+x_1) = c \land (1+1+x_1) = c \quad \bigr) \Bigr)
  \end{align*}
 where $\theta_{T_1}^{w}$ is a subexpression defined by
\[
 \theta_{T_1}^{w} \defeq \occv{ba}{w} + \occv{\path_1}{w} +
 \underbrace{x_1 + \cdots + x_1}_{\occv{\cycle_1}{w}
 \text{ times}}.
\]
 $\phi_{T_1}$ can be algorithmically verified to be not valid since
 the validity of a first-order formula of Presburger arithmetic is
 decidable (\cf Section~6.2 of \cite{sipser}).
 We can algorithmically verify, by using the same reduction into $\mathrm{\Sigma}_1$-formulae of Presburger arithmetic, that
 no trace in $\dgp{2}$ satisfies both balance
 and pumping condition.
 Thus $\card{\wmix{ab,ba,a,b}} < \infty$ by Theorem~\ref{thm}.
\end{example}
\section{Characterisation of the Equivalence}\label{sec:equiv}
In the previous section, multi-traces and traces play crucial role for
the characterisation of the finiteness.
Multi-traces are also important for the characterisation of the \emph{equivalence} of WMIX languages which is given here.
Before stating the main statement, we lift the notion of
traces of walks to one of languages.
For a language $L \subseteq A^*$, we define $\mtr(L)$
 the \emph{multi-trace of a language $L$ (of order $N$)} as
 \[
  \ltr(L) \defeq \{ \mtr(\walk) \mid \walk = \Walk(v, u), |v| = N, vu \in L\}.
 \]

The following theorem states that any WMIX language is
 completely determined by its multi-trace (excluding shorter part $A^{<N}$).
\begin{theorem}~\label{thm:equiv}
  Let $w_1, \ldots, w_k, w'_1, \ldots, w'_{k'} \in A^*$ and $N =
 \max(|w_1|, \ldots, |w_k|, |w'_1|, \ldots, |w'_{k'}|)$.
 Then
 $\wmix{w_1, \ldots, w_k} = \wmix{w'_1, \ldots, w'_{k'}}$ if and only if
 \[
  \wmix{w_1, \ldots, w_k} \cap A^{<N} = \wmix{w'_1, \ldots, w'_{k'}} \cap
 A^{<N}
 \]
 and
 \[
 \ltr(\wmix{w_1, \ldots, w_k}) = \ltr(\wmix{w'_1, \ldots, w'_{k'}}).
 \]
\end{theorem}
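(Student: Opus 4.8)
The plan is to show that a WMIX language $L = \wmix{w_1, \ldots, w_k}$ is completely reconstructible from the pair $(L \cap A^{<N}, \ltr(L))$, for any $N \geq \max(|w_1|, \ldots, |w_k|)$; applying this to both $L$ and $L' = \wmix{w'_1, \ldots, w'_{k'}}$ with the common bound $N$ from the statement then yields the equivalence. The words shorter than $N$ are accounted for by the first condition, so the heart of the matter is the behaviour of words of length at least $N$.

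First I would prove the following key lemma: for any word $w$ with $|w| \geq N$, writing $w = vu$ with $|v| = N$ and $\walk = \Walk(v, u)$, one has $w \in L$ if and only if $\mtr(\walk) \in \ltr(L)$. The forward direction is immediate from the definition of $\ltr(L)$. For the converse the essential observation is that the occurrence vector $\occv{vu}{w_1, \ldots, w_k}$ is determined by $M = \mtr(\walk)$ alone. Indeed, every multi-trace contains exactly one path $\path$, and since $\from{\path} = \from{\walk} = v$ the multi-trace already pins down the length-$N$ prefix $v$. Combining Proposition~\ref{prop:dg} and Proposition~\ref{prop:mtrace} gives
\[
\occv{vu}{w_1, \ldots, w_k} = \occv{\from{\path}}{w_1, \ldots, w_k} + \occv{\path}{w_1, \ldots, w_k} + \sum_{\cycle \in \cycles(\dg)} M(\cycle) \cdot \occv{\cycle}{w_1, \ldots, w_k},
\]
whose right-hand side depends only on $M$. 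Because $vu \in L$ iff $\diff(\occv{vu}{w_1, \ldots, w_k}) = 0$, any two words with the same multi-trace lie in $L$ together; hence if $M \in \ltr(L)$, witnessed by some $v'u' \in L$ with $\mtr(\Walk(v', u')) = M$, then $\occv{v'u'}{w_1, \ldots, w_k} = \occv{vu}{w_1, \ldots, w_k}$ and so $vu \in L$.

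Granting the lemma, the theorem follows. For the forward implication, $L = L'$ gives $L \cap A^{<N} = L' \cap A^{<N}$ at once, while $\ltr(L) = \ltr(L')$ holds because $\ltr$ is defined solely through membership in the language; here it is crucial that the single bound $N = \max(|w_1|, \ldots, |w'_{k'}|)$ is used for both languages, so that their multi-traces live in the same graph $\dg$ and are directly comparable. For the converse, take an arbitrary $w \in A^*$. If $|w| < N$, the hypothesis $L \cap A^{<N} = L' \cap A^{<N}$ decides membership identically. If $|w| \geq N$, write $w = vu$ with $|v| = N$ and apply the key lemma to both languages: $w \in L$ iff $\mtr(\Walk(v, u)) \in \ltr(L) = \ltr(L')$ iff $w \in L'$. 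Since $w$ was arbitrary, $L = L'$.

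The main obstacle is the converse of the key lemma, namely verifying that the multi-trace retains enough information to recover the entire occurrence vector and not merely the walk's contribution. Two points make this go through: that the unique path component recovers the prefix $v$ via its source $\from{\path}$, so the boundary term $\occv{v}{w_1, \ldots, w_k}$ that Proposition~\ref{prop:dg} adds to the walk's contribution is not lost; and that the ordering of the cycles is immaterial, exactly as recorded by Proposition~\ref{prop:mtrace}. Once these are established the remaining steps are routine bookkeeping.
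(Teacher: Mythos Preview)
Your proof is correct and follows essentially the same route as the paper: the core observation in both is that, via Proposition~\ref{prop:dg} and Proposition~\ref{prop:mtrace} together with the fact that the unique path in a multi-trace recovers the length-$N$ prefix $v=\from{\path}$, the occurrence vector $\occv{vu}{w_1,\ldots,w_k}$ depends only on $\mtr(\Walk(v,u))$. The only difference is organisational---you package this as a bidirectional key lemma and then argue directly, whereas the paper proves the ``if'' part by contraposition---but the mathematical content is the same.
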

\begin{proof}
 The ``only-if''-part is trivial. We prove the ``if''-part by contraposition.
 Assume $\wmix{w_1, \ldots, w_k} \neq \wmix{w'_1, \ldots, w'_{k'}}$.
 Then we can assume that there is some word $w$ such that $w \in \wmix{w_1,
 \ldots, w_k}$ but $w \notin \wmix{w'_1, \ldots, w'_{k'}}$ without loss of
 generality.
If $|w| < N$ it is clear that
 \[
w \in  \wmix{w_1, \ldots, w_k} \cap A^{<N} \neq \wmix{w'_1, \ldots, w'_{k'}} \cap
 A^{<N} \not\ni w
 \]
 and the ``if''-part holds.
 Thus we consider the case $|w| \geq N$.
 Let $w = vu$ such that $|v| = N$ and
 $M = \mtr(\Walk(v, u))$.
 We now prove that $\wmix{w'_1, \ldots, w'_{k'}}$ does not contain any word
 $w' = v'u' \, (|v'| = N)$ that has the same multi-trace with $w$ (\ie,
 $\mtr(\Walk(v', u')) = M = \mtr(\Walk(v, u))$; $v' = v$ holds in this case).
 By Proposition~\ref{prop:dg} and Proposition~\ref{prop:mtrace}, any subword occurrences
 in a word is completely determined by its multi-trace.
 Thus if there is a word $w' = vu'$ in $\wmix{w'_1, \ldots,
 w'_{k'}}$ such that $\mtr(\Walk(v, u')) = M$, then
\begin{align*}
\occv{w'}{w'_1, \ldots, w'_{k'}} = &\, \occv{v}{w'_1, \ldots, w'_{k'}} + \!\!\!\!\!\!
 \sum_{\path \in \paths(\dg)} \!\!\!\!\!
 M(\path) \cdot \occv{\path}{w'_1, \ldots, w'_{k'}}
 + \!\!\!\!\!\! \sum_{\cycle \in \cycles(\dg)} \!\!\!\!\!
 M(\cycle) \cdot \occv{\cycle}{w'_1, \ldots, w'_{k'}}
 \\
 = &\, \occv{w}{w'_1, \ldots, w'_{k'}}
\end{align*}
 from which we obtain $w \in \wmix{w'_1, \ldots, w'_{k'}}$; this contradicts with the assumption.
Therefore we can conclude that
\[
 M \in \ltr(\wmix{w_1, \ldots, w_k}) \neq
 \ltr(\wmix{w'_1, \ldots, w'_{k'}}) \not\ni M. \quad\qed
\]
\end{proof}

\subsection{Decidability}
By using Theorem~\ref{thm:equiv}, we can obtain an algorithm for deciding the equivalence of two WMIX languages.
This algorithm also uses the decidability of Presburger arithmetic, as
like the previous algorithm for the infiniteness,
but in contrast to the case of inifiniteness,
\emph{it is reduced into $\mathrm{\Pi}_1$-formula of Presburger arithmetic}.

\begin{corollary}\label{cor:equiv}
For any word $w_1, \ldots, w_k, w'_1, \ldots, w'_{k'} \in A^*$, it is
 decidable whether $\wmix{w_1, \ldots, w_k} = \wmix{w'_1, \ldots, w'_{k'}}$ or not.
\end{corollary}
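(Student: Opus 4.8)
The plan is to invoke Theorem~\ref{thm:equiv} to split the equivalence test into two independently decidable sub-tasks: comparing the two languages on the finite set $A^{<N}$, and comparing their language multi-traces $\ltr(\cdot)$. The first sub-task is immediate. The set $A^{<N}$ is finite and effectively listable, and for any fixed word $w$ one decides $w \in \wmix{w_1,\ldots,w_k}$ simply by computing the integers $|w|_{w_1},\ldots,|w|_{w_k}$ and testing their equality; hence $\wmix{w_1,\ldots,w_k} \cap A^{<N} = \wmix{w'_1,\ldots,w'_{k'}} \cap A^{<N}$ is decidable by brute force over $A^{<N}$.

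The substance is deciding $\ltr(\wmix{w_1,\ldots,w_k}) = \ltr(\wmix{w'_1,\ldots,w'_{k'}})$. First I would observe that, because $N$ is the maximum length over \emph{all} of $w_1,\ldots,w_k,w'_1,\ldots,w'_{k'}$, both families live over the same de Bruijn graph $\dg$, so their multi-traces are built from the common finite set $\paths(\dg) \cup \cycles(\dg)$. Every multi-trace is supported on a trace $T = \{\path\} \cup \{\cycle_1,\ldots,\cycle_m\}$ in which the path occurs exactly once, and is therefore determined by the tuple of cycle multiplicities $(x_1,\ldots,x_m) \in \nat^m$ with each $x_i \geq 1$. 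By Lemma~\ref{lem:trace} the valid traces are finite and effectively enumerable, and for a valid trace every such multiplicity tuple is realised by the walk $\comp{\dg}(\path, (\cycle_1^{x_1},\ldots,\cycle_m^{x_m}))$; one checks that this composition is always defined (inserting further copies of a cycle never removes the base vertex of any later cycle) and that its multi-trace is exactly the intended one. Combining Proposition~\ref{prop:dg} with Proposition~\ref{prop:mtrace}, a multi-trace supported on $T$ lies in $\ltr(\wmix{w_1,\ldots,w_k})$ precisely when
\[
\diff\!\left( \occv{\from{\path}}{w_1,\ldots,w_k} + \occv{\path}{w_1,\ldots,w_k} + \sum_{i=1}^m x_i \cdot \occv{\cycle_i}{w_1,\ldots,w_k} \right) = 0,
\]
and analogously for the primed family.

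Since distinct traces yield multi-traces with distinct supports and distinct multiplicity tuples yield distinct multi-traces, $\ltr(\cdot) = \ltr(\cdot)$ holds iff, for every valid trace $T$ and every admissible $(x_1,\ldots,x_m)$, the displayed $\diff$-condition for the unprimed words holds exactly when the corresponding condition for the primed words holds. For a fixed $T$ this is a single sentence of Presburger arithmetic: each instance of $\diff(\cdot) = 0$ unfolds to the conjunction of linear equalities asserting that all coordinates of its argument (which are linear in the $x_i$) coincide, so the biconditional universally quantified by $\forall x_1 \cdots \forall x_m$ (guarded by $x_i \geq 1$) is a $\mathrm{\Pi}_1$-formula. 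I would therefore enumerate the finitely many valid traces, form this $\mathrm{\Pi}_1$-sentence for each, and decide its validity using the decidability of Presburger arithmetic; the two languages are equivalent iff the finite $A^{<N}$-comparison succeeds and all these sentences are valid.

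The main obstacle I anticipate is not the logic but the realisability bookkeeping of the middle paragraph: one must be certain that membership $M \in \ltr(\cdot)$ is captured by the single arithmetic constraint $\diff(\cdots) = 0$ with no hidden side conditions. This requires (i) that the two families share the same graph $\dg$, so their multi-trace supports range over a common finite set, and (ii) that for a valid trace every tuple of positive multiplicities is genuinely attained by some walk, so that no multi-trace is spuriously excluded from or included in $\ltr(\cdot)$. Once this correspondence between multi-traces and positive multiplicity tuples is pinned down, the reduction to a finite batch of $\mathrm{\Pi}_1$ Presburger validity checks is routine.
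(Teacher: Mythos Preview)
Your proposal is correct and follows essentially the same route as the paper: invoke Theorem~\ref{thm:equiv}, handle $A^{<N}$ by brute force, enumerate the finitely many valid traces via Lemma~\ref{lem:trace}, and for each trace reduce the multi-trace comparison to the validity of a $\mathrm{\Pi}_1$ Presburger sentence in the cycle multiplicities. Your treatment of the realisability bookkeeping (that every positive multiplicity tuple over a valid trace is attained by a walk, so the Presburger quantification ranges exactly over genuine multi-traces) is in fact more explicit than the paper's, which simply asserts that ``every multi-trace can be represented by a corresponding trace and its multiplicity''.
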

\begin{proof}
Let $N = \max(|w_1|, \ldots, |w_k|, |w'_1|, \ldots, |w'_{k'}|)$.
We can effectively check $\wmix{w_1, \ldots, w_k} \cap A^{<N} = \wmix{w'_1,
 \ldots, w'_{k'}} \cap A^{<N}$ holds or not, since
 $A^{<N}$ is finite.
If $\wmix{w_1, \ldots, w_k} \cap A^{<N} \neq \wmix{w'_1, \ldots, w'_{k'}} \cap
 A^{<N}$ then two languages are not equivalent.
Otherwise, enumerate all possible traces in $\dg$, then for each trace
 $T$, and check whether every multi-trace $M$ with $T = \{ \walk \in \paths(\dg) \cup \cycles(\dg) \mid M(\walk) \neq 0\}$ satisfies
\begin{align}
 M \in \ltr(\wmix{w_1, \ldots, w_k}) \text{ if and only if }
 M \in \ltr(\wmix{w'_1, \ldots, w'_{k'}}) \tag{$\blacklozenge$}\label{lozen}
\end{align}
 or not.
If there is some multi-trace that does not satisfy Condition~\eqref{lozen} then $\wmix{w_1, \ldots, w_k} \neq \wmix{w'_1, \ldots,
 w'_{k'}}$, otherwise $\wmix{w_1, \ldots, w_k} = \wmix{w'_1, \ldots, w'_{k'}}$
 holds.
Since every multi-trace can be represented by a corresponding trace and
 its multiplicity (positive coefficients),
 for a trace $T = \{\path \} \cup \{\cycle_1, \ldots, \cycle_m\}$, the
 statement ``every multi-trace $M$ with $T = \{ \walk \in \paths(\dg)
 \cup \cycles(\dg) \mid M(\walk) \neq 0\}$ satisfies
 Condition~\eqref{lozen}'' can be represented by the following
 $\mathrm{\Pi}_1$-formula of Presburger arithmetic $\psi_T$:
\begin{align*}
 \psi_T \defeq \, & \forall x_1, \ldots, x_m \,
 (x_1 > 0 \land \cdots \land x_m > 0)\\
& \Rightarrow \left( \Bigl(\theta_{T}^{w_1} =
 \cdots = \theta_{T}^{w_k}
 \Bigr) \Leftrightarrow
 \Bigl( \theta_{T}^{w'_1} =
 \cdots = \theta_{T}^{w'_{k'}}
 \Bigr) \right)
\end{align*}
 where $\theta_T^{w}$ is a subexpression defined by
 \[
 \theta_T^{w} \defeq \occv{\from{\path}}{w} + \occv{\path}{w} +
 \sum_{i = 1}^{m} \underbrace{x_i + \cdots + x_i}_{\occv{\cycle_i}{w}
 \text{ times}}.  \qquad\qed
 \]
\end{proof}
\section{Open Problem}\label{sec:future}
We would like to introduce the following open
problem which asks the existence of a non-trivial finite WMIX language.
\begin{problem}[\cite{problem}]
Are there $w_1, \ldots w_k \in A^*$ such  that
 $\wmix{w_1, \ldots, w_k}$ is finite but
 $|w|_{w_1} = \cdots = |w|_{w_k} \geq 1$ for some $w \in \wmix{w_1, \ldots, w_k}$?
\end{problem}
Note that all examples of finite WMIX languages in this note are not of
this type.
The complexity issue is also interesting.
\begin{problem}
What is the complexity of the infiniteness 
 problem (resp. the equivalence problem) for WMIX languages?
\end{problem}

\vspace{1cm}
\noindent{\bf Acknowledgment.}
The author would like to thank Thomas Finn Lidbetter (University of
Waterloo) for telling me this topic.
The author also thank to an anonymous reviewer for letting me know
some known results on unambiguous CA~\cite{UnCA} and pointing out
that the decidability results presented in this note are also from those
results.

\bibliographystyle{splncs.bst}
\bibliography{ref}

\end{document}